\newtheorem{theorem}{Theorem}
\newtheorem{corollary}{Corollary}
\newtheorem{proposition}{Proposition}
\newenvironment{proof}[1][Proof]{\textbf{#1.} }{\ \rule{0.5em}{0.5em}}
\renewcommand{\cite}{\citeasnoun}
\begin{document}

\author{Dirk Bergemann\thanks{%
Yale University, dirk.bergemann@yale.edu.} \and Tibor Heumann\thanks{%
Pontificia Universidad Cat\'olica de Chile, tibor.heumann@uc.cl.} \and %
Stephen Morris\thanks{%
Massachusetts Institute of Technology, semorris@mit.edu}}
\title{The Optimality of Constant Mark-Up Pricing\thanks{%
We acknowledge financial support from NSF grants SES-2001208 and
SES-2049744, and ANID Fondecyt Iniciacion 11200365. We have benefitted from
comments by Yang Cai, Jason Hartline, Yingkai Li, and Zvika Neeman. We thank
Hongcheng Li for excellent research assistance.} }
\date{\today }
\maketitle

\begin{abstract}
We consider a nonlinear pricing environment with private information. We
provide profit guarantees (and associated mechanisms) that the seller can
achieve across all possible distributions of willingness to pay of the
buyers. With a constant elasticity cost function, constant markup pricing
provides the optimal revenue guarantee across all possible distributions of
willingness to pay and the lower bound is attained under a Pareto
distribution. We characterize how profits and consumer surplus vary with the
distribution of values and show that Pareto distributions are extremal. We
also provide a revenue guarantee for general cost functions. We establish
equivalent results for optimal procurement policies that support maximal
surplus guarantees for the buyer given all possible cost distributions of
the sellers.

\medskip

\noindent \textsc{Jel Classification: }D44, D47, D83, D84.

\noindent \textsc{Keywords: }Nonlinear Pricing, Second Degree Price
Discrimination, Profit Guarantees, Competitive Ratio.\textsc{\ }\newpage
\end{abstract}


\newpage

\section{Introduction\label{sec:into}}

\subsection{Motivation\label{subsec:mot}}

The arrival of digital commerce has lead to an increasing use of
personalization and differentiation strategies. With differentiated products
along the quality dimension and/or the quantity dimension comes the need for
nonlinear pricing policies or second degree price discrimination. The
optimal pricing strategies for quality or quantity differentiated products
were first investigated by \cite{muro78} and \cite{mari84}, respectively.
The optimal pricing strategies were shown to depend heavily on the prior
distribution of the private information regarding the types, and ultimately
the willingness-to-pay of the buyers.

Yet, in many circumstances the seller has only very weak and partial
information about the distribution of the types. The aim of this paper is to
devise robust pricing policies that: (i) are independent of the specific
distribution of the types and (ii) exhibit revenue guarantees across all
possible distributions. The main results of this paper construct such robust
pricing strategies and show that these pricing strategies can be expressed
in term of very simple and transparent pricing policies. We establish
revenue guarantees by comparing the profit under the robust pricing policy
and any given distribution of private information to the social welfare
attainable under the same distribution of private information. We then seek
to identify the pricing policy that guarantees the highest ratio of these
two measures across all feasible distributions with finite expectation. As
the social welfare coincides with the profit that the seller could attain
under perfect or first degree price discrimination, the ratio has two
possible interpretations. In this second perspective we compare the revenue
of the seller under incomplete information to the revenue that the seller
could attain under complete information (and hence perfect price
discrimination).

We consider two broad classes of pricing problems. First, we consider a
class of quality differentiated pricing problems as first analyzed by \cite%
{muro78}. Here, the type of the buyer is the willingness-to-pay for a
quality. The cost of production is an increasing and convex function of
quality. We focus on iso-elastic cost function which allows to express the
cost environment of the seller in terms of a constant cost elasticity.
Second, we consider a class of quantity differentiated pricing as first
analyzed by \cite{mari84}. Here, the seller has a constant marginal cost of
producing additional units of his good and the buyer has a concave utility
for quantities. In this environment, the elasticity of demand is initially
assumed to be constant across all types of the buyer.

For the environment of quality differentiated products we exhibit a pricing
mechanism that attains a profit guarantee that delivers at least a fraction
of the social surplus across all possible prior distributions of private
information. The profit guarantee that is expressed in the ratio is an
expression that only depends on the cost elasticity $\eta $\ of the quality.
The ratio is described by a polynomial only in terms of the cost elasticity
(Theorem \ref{th:1}). Interestingly, the optimal mechanism can be
implemented by a pricing policy that maintains a constant mark-up for each
additional unit. The mark-up is also expressed in terms of the elasticity
and is given by $\left( \eta -1\right) /\eta $ (Corollary\ \ref{cor:opt}).
For cost functions with an elasticity $\eta $ near $1$, the ratio is the
largest and is given by $1/e$ in the limit $\eta \rightarrow 1$. As the cost
function becomes more convex, the ratio decreases. With a quadratic cost
function, i.e., when $\eta =2$, the ratio is still $1/4$. Eventually as the
convexity of the cost becomes more pronounced, the ratio converges to $0$ as 
$\eta \rightarrow \infty $. We show that these profit guarantees are sharp
and are attained by specific instances of distributions, namely Pareto
distributions whose shape $\eta /\left( \eta -1\right) $\ varies
systematically with the elasticity of the cost function.

We then extend the scope of our analysis and give a complete
characterization of the efficient frontier of producer surplus and consumer
surplus that can be attained by optimal mechanisms across all possible
distributions and all possible cost elasticities. Here we express both the
ratios of the producer and consumer surplus relative to the social surplus.
In fact, for every elasticity, the largest share of consumer surplus is
attained by the very distribution under which the robust mechanism is the
optimal revenue maximizing mechanism. While the efficient frontier is of
immediate interest, we in fact obtain the entire set of pairs of attainable
surplus across all distributions. Interestingly, a version of the Pareto
distribution appears in all of the boundaries that form the set of
attainable surplus pairs. The analysis delivers an exact upper bound on the
profit guarantee when the cost function is entirely characterized by a
constant elasticity $\eta $. Yet, the mechanism that generates the bound,
namely the constant mark-up pricing rule also generates a lower bound on the
profit guarantee for arbitrary convex cost functions as long as the marginal
cost is convex as well (Proposition \ref{prop:low}).

We then turn to the model of quantity differentiation. Here the seller has a
constant marginal cost to provide additional units of his product. By
contrast, the buyer has a concave utility for the product and thus
diminishing marginal utility for additional units of the product. The demand
elasticity $\eta $ now summarizes the economics of the environment. The
arguments developed in the setting with nonlinear cost can be largely
transferred and yield equally sharp results.\ For every demand elasticity,
we obtain a profit guarantee that is polynomial of the demand elasticity
alone. Surprisingly, the robust mechanism is given by a linear pricing
mechanism

\subsection{Related Literature\label{subsec:rel}}

We derive performance guarantees and robust pricing policies that secure
these guarantees for large classes of second-degree price discrimination
problems. We do this without imposing any restriction on the distribution of
the values, such as regularity or monotonicity assumptions, or finite
support conditions. We only require that the social surplus of the
allocation problem has finite expectation.

The optimal monopoly pricing problem for a single object is a special
limiting case of our model when the marginal cost of increasing supply
becomes infinitely large. The analysis of a performance guarantee in the
single-unit case is also a special case of a result of \cite{neem03}. He
investigates the performance of English auctions with and without reserve
prices. The case of the optimal monopoly pricing is a special case of the
auction environment with a single bidder. He establishes a tight bound for
the single-bidder case that is given by a \textquotedblleft truncated Pareto
distribution\textquotedblright\ (Theorem 5). The bound that he derives is a
function of a parameter that is given by the ratio of the bidder's expected
value and the bidder's maximal value. Without the introduction of this
parameter the bound is zero: as this ratio converges to zero, so does the
bound. Similarly, \cite{haro14} establish that for distribution with support 
$\left[ 1,h\right] $, the competitive ratio $1+\ln h$. \ Thus as $h$ grows,
the approximation becomes arbitrarily weak (Theorem 2.1). By contrast, our
results obtain a constant approximation for every convex cost function.
Thus, the introduction of a convex cost function (or a concave utility
function) leads to noticeable and perhaps surprising strengthening of the
approximation quality.

\cite{carr17} considers a robust version of multi-item pricing problem. The
buyer has an additive value for finitely many items and has private
information about the value of each item. There the seller knows the
marginal distribution for every item but is uncertain about the joint
distribution of the valuation. \cite{carr17} considers a minmax problem and
shows that separate item-by-item pricing is the robustly optimal pricing
policy. \cite{dero22} consider a related problem under informational
robustness. There, the joint distribution of values is given by commonly
known distribution but nature or the buyer can choose the optimal
information structure. In the corresponding solution of the mechanism design
problem, they show that the optimal mechanism is always one of pure bundling.

The focus of this paper is on second-degree price discrimination alone. \cite%
{bebm15} consider the limits of third-degree price discrimination. While
most of their analysis is focused on the single-unit demand, they present
some partial results how market segmentation can affect second-degree price
discrimination. \cite{hasi22} present some additional results on the limits
of multi-product discrimination for a given prior distribution of values.

\section{Model\label{sec:model}}

A seller supplies goods of varying quality $q\in \mathbb{R}_{+}$ to a buyer.
The buyer has a willingness-to-pay (or value) ${v}\in \mathbb{R}_{+}$ for
quality. The utility net of the payment $t\in \mathbb{R}_{+}$ is: 
\begin{equation}
u({v},q,t)={v}q-t.  \label{u}
\end{equation}%
The value ${v}\in \mathbb{R}_{+}$ is distributed according to $F\in \Delta ([%
\underline{v},\bar{v}]),\,$with support $0\leq \underline{v}<\bar{v}\leq
\infty $. The seller's cost of providing quality $q$ is 
\begin{equation}
c(q)=q^{\eta }/\eta .  \label{ce}
\end{equation}
Note that the cost has constant elasticity equal to $\eta \in \left(
1,\infty \right) $.

An important benchmark is the social surplus generated by the efficient
allocation: 
\begin{equation}
S_{F}\triangleq \mathbb{E}[\max_{q}\{vq-c(q)\}]  \label{eq:ts}
\end{equation}%
This is clearly an upper bound on the profit and consumer surplus generated
by any mechanism. We assume that: 
\begin{equation*}
\lim_{v\rightarrow \infty }(1-F(v))v^{\frac{\eta }{\eta -1}}=0.
\end{equation*}%
This is a necessary and sufficient condition to guarantee that $S_{F}<\infty 
$.

The seller chooses a menu (or direct mechanism)\ with qualities $Q(v)$ at
prices $T(v):$%
\begin{equation}
M\triangleq \left\{ (Q(v),T(v))\right\} _{v\in \mathbb{R}}.
\end{equation}%
We use capital letters to denote functions and lower case letters to denote
real numbers. The mechanism has to satisfy incentive compatibility and
participation constraints. Thus for all $v,v^{\prime }\in \mathbb{R}:$ 
\begin{align}
vQ(v)-T(v)& \geq vQ(v^{\prime })-T(v^{\prime });  \label{eq:ic} \\
vQ(v)-T(v)& \geq 0.  \label{eq:ir}
\end{align}%
The expected seller's profit and buyer's surplus generated by a distribution 
$F$ and a mechanism $M$\ are respectively: 
\begin{equation*}
\Pi _{F,M}\triangleq \mathbb{E}[T(v)-c(Q(v))]\quad \text{and}\quad
U_{F,M}\triangleq \mathbb{E}[Q(v)v-T(v)].
\end{equation*}%
The optimal mechanism for distribution $F$ is: 
\begin{equation*}
M^{F}=\underset{M}{\arg \max }\ \Pi _{F,M}.
\end{equation*}%
With some abuse of notation, we denote by 
\begin{equation*}
\Pi _{F}=\Pi ^{F,M^{F}}\text{ and }U_{F}=U^{F,M^{F}}
\end{equation*}%
the profit and consumer surplus evaluated at the seller-optimal mechanism.

\section{A Profit-Guarantee Mechanism\label{sec:mini}}

\subsection{A Minmax Problem}

The first objective is to find the optimal profit-guarantee mechanism $M_{G}$
defined as: 
\begin{equation}
M_{G}=\underset{M}{\arg \max }\ \inf_{F}\ \frac{\Pi _{F,M}}{S_{F}}.  \tag{A}
\label{eq:revg}
\end{equation}%
As a direct by-product we find the distribution of values that minimizes the
seller's normalized profit: 
\begin{equation}
\inf_{F}\max_{M}\ \frac{\Pi _{F,M}}{S_{F}}.  \tag{B}  \label{eq:minprof}
\end{equation}%
In fact, we will show that the Minimax Theorem applies in our model, so %
\eqref{eq:revg} and \eqref{eq:minprof} will be tightly connected.

As in related work on optimal pricing, the Pareto distribution and truncated
versions of the Pareto distribution will play an important role. Towards
this end, we define the truncated version of the Pareto distribution by: 
\begin{equation}
P_{\alpha ,k}(v)\triangleq 
\begin{cases}
1-\frac{1}{v^{\alpha }} & \text{if }v<k; \\ 
1 & \text{if }v\geq k.%
\end{cases}
\label{trunc}
\end{equation}%
That is, $P_{\alpha , k}(v)$ is the same as a Pareto distribution for values 
$v<k$ and it has a mass point at $k$. We denote by 
\begin{equation*}
\lim_{k\rightarrow \infty }P_{\alpha ,k}=P_{\alpha }.
\end{equation*}%
In other words, when we omit the parameter $k$ from the truncated Pareto
distribution, it means we are taking the limit as $k\rightarrow \infty $. If 
$\alpha >1,$ then the expectation of the willingness-to-pay under the Pareto
distribution is finite, and if $\alpha >\eta /(\eta -1)$, then the social
surplus remains finite.

A mechanism $M$\ is incentive compatible and meets the individual
rationality constraint at $v=0$ if and only if the qualities $\{Q(v)\}_{v\in 
\mathbb{R}}$ are non-decreasing and the corresponding transfers are given
by: 
\begin{equation*}
T(v)=vQ(v)-\int_{0}^{v}Q(s)ds.
\end{equation*}%
Since the optimal mechanism is uniquely defined by the qualities, we often
refer to a mechanism as a set of qualities $\{Q(v)\}_{v\in \mathbb{R}}\ $%
omitting the prices. The profits generated by a mechanism are: 
\begin{equation}
\Pi \triangleq \int_{0}^{\infty }(vQ(v)-c(Q(v)))f(v)dv-\int_{0}^{\infty
}Q(v)(1-F(v))dv.  \label{profits}
\end{equation}%
We then have that the profit can be computed as a function of the qualities $%
\{Q(v)\}_{v\in \mathbb{R}}$ alone.

Under the Pareto distribution, the gross virtual values are given by: 
\begin{equation}
\phi (v)\triangleq \frac{\alpha -1}{\alpha }v,  \label{p1}
\end{equation}%
and so the qualities $Q\left( v\right) $\ supplied by the seller under the
Bayes optimal mechanism are given by: 
\begin{equation}
Q\left( v\right) =(\frac{\alpha -1}{\alpha }v)^{\frac{1}{\eta -1}}.
\label{p2}
\end{equation}

We now give the first main result, which provides the optimal
profit-guarantee mechanism.

\begin{theorem}[Profit Guarantee Mechanism]
\label{th:1}\quad \newline
The profit-guarantee mechanism $M^{\ast }$ is given by: 
\begin{equation}
Q(v)=\frac{v^{\frac{1}{\eta -1}}}{\eta ^{1/(\eta -1)}},  \label{q}
\end{equation}%
and generates normalized profit 
\begin{equation}
\frac{\Pi _{M^{\ast }}}{S_{F}}=\frac{1}{\eta ^{\frac{\eta }{\eta -1}}},
\label{q1}
\end{equation}%
for every $F$.%
\end{theorem}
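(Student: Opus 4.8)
The claim is that the mechanism $Q(v)=v^{1/(\eta-1)}/\eta^{1/(\eta-1)}$ produces profit exactly $S_F/\eta^{\eta/(\eta-1)}$ for \emph{every} distribution $F$, not just in the worst case. The key structural observation is that this $Q$ is pointwise proportional to the efficient quality: maximizing $vq - q^\eta/\eta$ over $q$ gives first-order condition $v = q^{\eta-1}$, so the efficient quality is $Q^{\text{eff}}(v)=v^{1/(\eta-1)}$, and our proposed $Q(v) = \eta^{-1/(\eta-1)}Q^{\text{eff}}(v)$. So the plan is to show that scaling the efficient allocation down by the factor $\eta^{-1/(\eta-1)}$ transforms the efficient surplus $S_F$ into profit by exactly the factor $\eta^{-\eta/(\eta-1)}$, distribution-by-distribution.

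First I would write the efficient surplus integrand explicitly: $\max_q\{vq-c(q)\} = v\cdot v^{1/(\eta-1)} - \frac{1}{\eta}v^{\eta/(\eta-1)} = (1 - \tfrac1\eta)v^{\eta/(\eta-1)} = \tfrac{\eta-1}{\eta}v^{\eta/(\eta-1)}$, so that $S_F = \tfrac{\eta-1}{\eta}\,\mathbb{E}\big[v^{\eta/(\eta-1)}\big]$. Next, using the profit formula \eqref{profits} — or equivalently the fact that for a non-decreasing $Q$ the profit under the IC/IR-optimal transfers equals $\int_0^\infty (\phi_F(v)Q(v) - c(Q(v)))f(v)\,dv$ where $\phi_F$ is the virtual value — I would instead find it cleanest to integrate by parts directly. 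Writing $Q(v)=\beta v^{1/(\eta-1)}$ with $\beta = \eta^{-1/(\eta-1)}$, the term $\int_0^\infty Q(v)(1-F(v))\,dv$ is handled by noting $\int_0^\infty v^{1/(\eta-1)}(1-F(v))\,dv = \tfrac{\eta-1}{\eta}\mathbb{E}[v^{\eta/(\eta-1)}]$ via integration by parts (the boundary term vanishes by the stated finiteness condition $\lim_{v\to\infty}(1-F(v))v^{\eta/(\eta-1)}=0$). Then $\Pi_{M^*}$ becomes a pure multiple of $\mathbb{E}[v^{\eta/(\eta-1)}]$, hence a pure multiple of $S_F$, and I would collect the constants.

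Concretely: $\Pi_{M^*} = \mathbb{E}[vQ(v) - c(Q(v))] - \int_0^\infty Q(v)(1-F(v))\,dv = \beta\,\mathbb{E}[v^{\eta/(\eta-1)}] - \tfrac{\beta^\eta}{\eta}\mathbb{E}[v^{\eta/(\eta-1)}] - \beta\cdot\tfrac{\eta-1}{\eta}\mathbb{E}[v^{\eta/(\eta-1)}]$, and since $\beta - \beta\tfrac{\eta-1}{\eta} = \tfrac{\beta}{\eta}$ this collapses to $\tfrac{1}{\eta}(\beta - \beta^\eta)\,\mathbb{E}[v^{\eta/(\eta-1)}]$. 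With $\beta^\eta = \eta^{-\eta/(\eta-1)} = \beta/\eta$ one gets $\beta - \beta^\eta = \beta(1-\tfrac1\eta) = \beta\tfrac{\eta-1}{\eta}$, so $\Pi_{M^*} = \tfrac{\beta(\eta-1)}{\eta^2}\mathbb{E}[v^{\eta/(\eta-1)}] = \tfrac{\beta}{\eta}\cdot S_F = \tfrac{1}{\eta^{\eta/(\eta-1)}}S_F$, which is \eqref{q1}.

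I do not anticipate a genuine obstacle here — the result is an identity rather than an inequality, and the whole content is the exponent bookkeeping plus one integration by parts. The only point requiring care is justifying the vanishing boundary term in that integration by parts, which is exactly what the assumed tail condition on $F$ provides; and one should note this step needs $Q$ to be non-decreasing (true since $1/(\eta-1)>0$) so that the envelope-based transfer formula and the profit expression \eqref{profits} are valid. The harder work — showing this mechanism actually \emph{solves} the minmax \eqref{eq:revg}, i.e.\ that no mechanism does better against the worst-case $F$, and identifying the Pareto minimizer — is logically separate and presumably handled in subsequent results via the Minimax Theorem; Theorem \ref{th:1} as stated is only the (easy) verification that $M^*$ delivers the constant ratio.
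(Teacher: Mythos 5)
Your verification is correct and is essentially the paper's own argument: you substitute $Q(v)=\beta v^{1/(\eta-1)}$ with $\beta=\eta^{-1/(\eta-1)}$ into the profit formula \eqref{profits}, integrate the information-rent term by parts using the tail condition $\lim_{v\to\infty}(1-F(v))v^{\eta/(\eta-1)}=0$, and collect constants against $S_F=\tfrac{\eta-1}{\eta}\mathbb{E}\big[v^{\eta/(\eta-1)}\big]$; your exponent bookkeeping (in particular $\beta^\eta=\beta/\eta$) matches the paper's computation exactly, and your remark that monotonicity of $Q$ is needed for \eqref{profits} to be the profit under the IC/IR transfers is a point the paper uses implicitly.

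The one place you diverge is the scope of the claim. Because the theorem labels $M^\ast$ \emph{the} profit-guarantee mechanism in the sense of \eqref{eq:revg}, the paper's proof does not stop at the identity \eqref{q1}: it closes the optimality claim immediately with a saddle-point observation rather than an abstract minimax theorem. Namely, under the Pareto distribution with shape $\alpha=\eta/(\eta-1)$ the virtual value is linear, $\phi(v)=\tfrac{\alpha-1}{\alpha}v=v/\eta$ (equations \eqref{p1}--\eqref{p2}), so the Bayes-optimal quality is $(v/\eta)^{1/(\eta-1)}$, which coincides with \eqref{q}; hence no mechanism can guarantee more than $1/\eta^{\eta/(\eta-1)}$ against that distribution (taking truncations $P_{\alpha,k}$ and $k\to\infty$ to keep $S_F$ finite), and the guarantee attained uniformly by $M^\ast$ is therefore the best possible. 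You explicitly deferred this step to later results, which is defensible given that Theorem \ref{prop:2} restates it, but if your proof is to stand alone for the statement as written you should include this short Pareto saddle-point argument rather than appeal to a separate minimax result.
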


\begin{proof}
We first prove that, for {every} distribution $F$, the profits generated by
qualities \eqref{q} are: 
\begin{equation*}
\Pi _{M}=\frac{1}{\eta ^{\frac{\eta }{\eta -1}}}S_{F}.
\end{equation*}%
Note that we write $\Pi _{M}$ because these are the profits generated by %
\eqref{q}, which in general differs from the optimal mechanism under $F$.
Replacing \eqref{q} into \eqref{profits}, the qualities, we get: 
\begin{equation*}
\Pi =\int_{0}^{\infty }(z-\frac{z^{\eta }}{\eta })v^{\frac{\eta }{\eta -1}%
}f(v)dv-\int_{0}^{\infty }zv^{\frac{1}{\eta -1}}(1-F(v))dv,
\end{equation*}%
where $z={1}/{\eta ^{1/(\eta -1)}}$. Integrating by parts the second term we
get: 
\begin{equation*}
\Pi =\int_{0}^{\infty }(z-\frac{z^{\eta }}{\eta })v^{\frac{\eta }{\eta -1}%
}dv-\frac{\eta -1}{\eta }\int_{0}^{\infty }zv^{\frac{\eta }{\eta -1}}f(v)dv.
\end{equation*}%
Collecting terms, we get: 
\begin{equation*}
\Pi =\int_{0}^{\infty }(z-\frac{z^{\eta }}{\eta }-\frac{\eta -1}{\eta }z)v^{%
\frac{\eta }{\eta -1}}f(v)dv.
\end{equation*}%
%
%
%
%
%
%
%
%
%
%
%
%
%
%
%
Replacing $z={1}/{\eta ^{1/(\eta -1)}}$ we get that: 
\begin{equation*}
\left. \Pi \right\vert _{_{z=\frac{1}{\eta ^{\frac{1}{\eta -1}}}}}=\frac{1}{%
\eta ^{\frac{\eta }{\eta -1}}}\frac{\eta -1}{\eta }\int_{0}^{\infty }v^{%
\frac{\eta }{\eta -1}}f(v)dv.
\end{equation*}%
We also note that the social surplus $S_{F}$ under distribution $F$ is: 
\begin{equation*}
S_{F}=\frac{\eta -1}{\eta }\int_{0}^{\infty }v^{\frac{\eta }{\eta -1}}f(v)dv.
\end{equation*}%
We thus get that this strategy guarantees a fraction $1/\eta ^{\frac{\eta }{%
\eta -1}}$ of the total surplus. Since \eqref{q} attains a fraction ${1}/{%
\eta ^{{\eta }/{\eta -1}}}$ of the social surplus, the infimum cannot be
smaller than this. But this fraction of the social surplus is exactly
attained by the Pareto distribution with shape $\alpha =\eta /(\eta -1)$.
Under this Pareto distribution, the above mechanism is the optimal
mechanism. Hence, this gives (both) the optimal profit-guarantee mechanism
and the minimum fraction of social surplus that can be generated by a
Bayesian-optimal mechanism for any distribution of values.
\end{proof}

This result establishes a mechanism that generates a profit guarantee at the
level given by (\ref{q1}). Interestingly this mechanism generates the \emph{%
same} normalized profit for every distribution $F$\ of values!

We now establish that the profit guarantee given by (\ref{q1}) is indeed the
optimal--the highest-- profit-guarantee that can provided. For this, we
first give the explicit solution to the Bayes-optimal mechanism \eqref{p1}
when the Pareto distribution has the shape parameter $\alpha =\eta /(\eta
-1) $ and the Bayes optimal mechanism equals the profit-guarantee
established in Theorem \ref{th:1}.

\begin{theorem}[Minimax Distribution]
\label{prop:2}\quad \newline
The profit-guarantee mechanism is the Bayes optimal mechanism against the
Pareto distribution with shape parameter: 
\begin{equation*}
\alpha =\frac{\eta }{\eta -1},
\end{equation*}%
and attains the infimum: 
\begin{equation}
\inf_{F}\frac{\Pi _{F}}{S_{F}}=\frac{\Pi _{P_{\alpha }}}{S_{P_{\alpha }}}%
\bigg|_{\alpha =\frac{\eta }{\eta -1}}=\frac{1}{\eta ^{\frac{\eta }{\eta -1}}%
}.  \label{eq:l}
\end{equation}
\end{theorem}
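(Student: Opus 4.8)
The easy half of \eqref{eq:l} is already in hand: because the mechanism $M^{\ast}$ of \eqref{q} is feasible, Theorem~\ref{th:1} gives, for every distribution $F$ with $S_{F}<\infty$, that $\Pi_{F}=\max_{M}\Pi_{F,M}\geq \Pi_{F,M^{\ast}}=S_{F}/\eta^{\eta/(\eta-1)}$, hence $\inf_{F}\Pi_{F}/S_{F}\geq 1/\eta^{\eta/(\eta-1)}$. So the plan has three steps: (i) compute the Bayes-optimal profit under the Pareto family $P_{\alpha}$ in closed form; (ii) show that this ratio decreases to the bound as $\alpha\downarrow \eta/(\eta-1)$; and (iii) check that at the critical shape the Bayes-optimal mechanism is exactly $M^{\ast}$.

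For (i), fix $\alpha>\eta/(\eta-1)$ — the range in which $S_{P_{\alpha}}<\infty$ — and regroup \eqref{profits} in virtual-value form, $\Pi=\mathbb{E}[\phi(v)Q(v)-c(Q(v))]$ with $\phi(v)=v-(1-F(v))/f(v)=\frac{\alpha-1}{\alpha}v$ as in \eqref{p1}. Since $\alpha>1$ the virtual value is increasing, so the pointwise maximizer $Q(v)=\phi(v)^{1/(\eta-1)}$ of \eqref{p2} is nondecreasing, hence implementable and optimal; substituting, the integrand becomes $\tfrac{\eta-1}{\eta}\phi(v)^{\eta/(\eta-1)}=\tfrac{\eta-1}{\eta}\bigl(\tfrac{\alpha-1}{\alpha}\bigr)^{\eta/(\eta-1)}v^{\eta/(\eta-1)}$, whereas the first-best integrand is $\max_{q}\{vq-c(q)\}=\tfrac{\eta-1}{\eta}v^{\eta/(\eta-1)}$. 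Taking expectations gives the clean identity $\Pi_{P_{\alpha}}=\bigl(\tfrac{\alpha-1}{\alpha}\bigr)^{\eta/(\eta-1)}S_{P_{\alpha}}$, so $\Pi_{P_{\alpha}}/S_{P_{\alpha}}=\bigl(\tfrac{\alpha-1}{\alpha}\bigr)^{\eta/(\eta-1)}$ for every such $\alpha$.

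For (ii) and (iii): the map $\alpha\mapsto\bigl(\tfrac{\alpha-1}{\alpha}\bigr)^{\eta/(\eta-1)}$ is increasing, so its infimum over $\alpha>\eta/(\eta-1)$ is the limit as $\alpha\downarrow\eta/(\eta-1)$, which equals $(1/\eta)^{\eta/(\eta-1)}=1/\eta^{\eta/(\eta-1)}$; combined with the lower bound above this gives $\inf_{F}\Pi_{F}/S_{F}=1/\eta^{\eta/(\eta-1)}$. And setting $\alpha=\eta/(\eta-1)$, so that $\frac{\alpha-1}{\alpha}=1/\eta$, turns \eqref{p2} into $Q(v)=(v/\eta)^{1/(\eta-1)}=v^{1/(\eta-1)}/\eta^{1/(\eta-1)}$ — exactly the profit-guarantee mechanism \eqref{q}; hence $M^{\ast}$ is the Bayes-optimal mechanism against $P_{\alpha}$ at the critical shape, and the chain \eqref{eq:l} holds, with its middle term understood as the (continuous) extension of $\Pi_{P_{\alpha}}/S_{P_{\alpha}}$ to $\alpha=\eta/(\eta-1)$.

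The only real subtlety — and the main obstacle — is that $P_{\alpha}$ at $\alpha=\eta/(\eta-1)$ sits on the boundary of the admissible class: there $S_{P_{\alpha}}=\Pi_{P_{\alpha}}=\infty$, so the infimum is approached but not literally attained within the class of finite-surplus distributions. One deals with this either through the $\alpha\downarrow\eta/(\eta-1)$ limit above, or, if one prefers an explicit sequence of bona fide distributions, through the truncated Paretos $P_{\eta/(\eta-1),k}$: a short computation shows the atom at $k$ contributes only an $O(1)$ term to each of $\Pi$ and $S$, negligible beside their common $\Theta(\log k)$ bulk, so $\Pi_{P_{\eta/(\eta-1),k}}/S_{P_{\eta/(\eta-1),k}}\to 1/\eta^{\eta/(\eta-1)}$. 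Everything else — implementability of the relaxed solution and the absence of a boundary term in the regrouping of \eqref{profits} — is routine under the maintained assumptions.
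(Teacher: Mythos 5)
Your proposal is correct and follows essentially the same route as the paper: derive the closed-form ratio $\Pi_{P_{\alpha}}/S_{P_{\alpha}}=\bigl(\tfrac{\alpha-1}{\alpha}\bigr)^{\eta/(\eta-1)}$ for finite-surplus Paretos, take the limit $\alpha\downarrow\eta/(\eta-1)$ (or use truncated Paretos $P_{\alpha,k}$ with $k\to\infty$), and combine with the uniform guarantee of Theorem~\ref{th:1} as the matching lower bound. You merely spell out steps the paper asserts without derivation — the virtual-value computation behind the closed form, monotonicity in $\alpha$, the identification of \eqref{p2} with \eqref{q} at the critical shape, and the $O(1)$-versus-$\Theta(\log k)$ truncation asymptotics — so no substantive difference.
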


\begin{proof}
We can then compute the profit and social surplus in closed-form to conclude
that, when $\alpha >\eta /\eta -1$. 
\begin{equation*}
\frac{\Pi _{P_{\alpha }}}{S_{P_{\alpha }}}=\left( \frac{\alpha -1}{\alpha }%
\right) ^{\frac{\eta }{\eta -1}}.
\end{equation*}%
This equation is valid only when $\alpha <\eta /\eta -1$ because these are
the Pareto distributions for which social surplus remains finite. However,
by taking limit $\alpha \rightarrow \eta /(\eta -1)$, we get \eqref{eq:l}
(and we also consider a truncated distribution as in \eqref{trunc} and take
the limit $k\rightarrow \infty $).

Theorem \ref{prop:2} then follows from Theorem \ref{th:1}. If the seller can
guarantee herself a fraction $1/\eta ^{\frac{\eta }{\eta -1}}$ of the social
surplus, and this is in fact the best she can do for some distribution of
values, then this distribution of values minimizes the fraction of the
social surplus that the seller can generate as profit.
\end{proof}

Thus the Pareto distribution with shape $\alpha =\eta /(\eta -1)$ allows the
seller to generate the least amount of normalized profit. Finally, we can
analyze how the minimum normalized-profit varies with the cost elasticity.
It is easy to check that the normalized profit guarantee $1/\eta ^{\frac{%
\eta }{\eta -1}}$ is decreasing in $\eta .$ We can evaluate the minimum
profit for different cost elasticities:%
\begin{equation}
\lim_{\eta \rightarrow 1}\frac{1}{\eta ^{\frac{\eta }{\eta -1}}}=\frac{1}{e}%
;\ \ \ \lim_{\eta \rightarrow 2}\frac{1}{\eta ^{\frac{\eta }{\eta -1}}}=%
\frac{1}{4};\ \ \ \lim_{\eta \rightarrow \infty }\frac{1}{\eta ^{\frac{\eta 
}{\eta -1}}}=0.  \label{pr}
\end{equation}%
Note that the limit $\eta \rightarrow \infty $ corresponds to the case in
which the seller is selling an indivisible good. The limit $\eta \rightarrow
1$ corresponds to the case in which the seller has nearly\ constant marginal
cost.

We can similarly evaluate the consumer surplus at these different cost
elasticities and Pareto distribution and find that:%
\begin{equation}
\lim_{\eta \rightarrow 1}\frac{1}{\eta ^{\frac{1}{\eta -1}}}=\frac{1}{e};\ \
\ \ \lim_{\eta \rightarrow 2}\frac{1}{\eta ^{\frac{1 }{\eta -1}}}=\frac{1}{2}%
;\ \ \ \ \lim_{\eta \rightarrow \infty }\frac{1}{\eta ^{\frac{1}{\eta -1}}}%
=1.  \label{cs}
\end{equation}%
The minmax solution of profit guarantee mechanism and Pareto distribution
generate particular pairs of surplus sharing among seller and consumers.

\begin{corollary}[Consumer Surplus in the Profit Guarantee Mechanism]
\label{csc}\quad \newline
The profit-guarantee mechanism $M^{\ast }$ generates a constant consumer
surplus 
\begin{equation}
\frac{U_{M^{\ast }}}{S_{F}}=\frac{1}{\eta ^{\frac{1}{\eta -1}}}  \label{csc1}
\end{equation}%
across all\ distributions $F$.
\end{corollary}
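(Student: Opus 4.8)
The plan is to reuse the computations already performed in the proof of Theorem~\ref{th:1}, since the mechanism $M^{\ast}$ is fixed by \eqref{q} and does not depend on $F$. First I would write the consumer surplus generated by $M^{\ast}$ directly from the envelope formula $T(v)=vQ(v)-\int_{0}^{v}Q(s)\,ds$, which gives
\begin{equation*}
U_{M^{\ast}}=\mathbb{E}\Bigl[\int_{0}^{v}Q(s)\,ds\Bigr]
=\int_{0}^{\infty}\Bigl(\int_{0}^{v}Q(s)\,ds\Bigr)f(v)\,dv.
\end{equation*}
Substituting $Q(s)=s^{1/(\eta-1)}/\eta^{1/(\eta-1)}$ and evaluating the inner integral yields $\int_{0}^{v}Q(s)\,ds=\frac{\eta-1}{\eta}\,\eta^{-1/(\eta-1)}\,v^{\eta/(\eta-1)}$.

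Next I would plug this back in and pull the constant out of the expectation, obtaining
\begin{equation*}
U_{M^{\ast}}=\frac{\eta-1}{\eta}\,\eta^{-\frac{1}{\eta-1}}\int_{0}^{\infty}v^{\frac{\eta}{\eta-1}}f(v)\,dv.
\end{equation*}
Then I would divide by $S_{F}=\frac{\eta-1}{\eta}\int_{0}^{\infty}v^{\eta/(\eta-1)}f(v)\,dv$, which is exactly the closed form for the social surplus recorded in the proof of Theorem~\ref{th:1}. The factor $\frac{\eta-1}{\eta}\int_{0}^{\infty}v^{\eta/(\eta-1)}f(v)\,dv$ cancels completely, leaving $U_{M^{\ast}}/S_{F}=\eta^{-1/(\eta-1)}$, independent of $F$. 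This establishes \eqref{csc1}.

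Alternatively — and this is the sanity check I would include — one can avoid the explicit integration by decomposing the social surplus pointwise: for each realized value $v$, the efficient surplus is $\max_{q}\{vq-c(q)\}=\frac{\eta-1}{\eta}v^{\eta/(\eta-1)}$, while under $M^{\ast}$ the realized total surplus (quality value minus cost) is $vQ(v)-c(Q(v))=(z-z^{\eta}/\eta)v^{\eta/(\eta-1)}$ with $z=\eta^{-1/(\eta-1)}$, which simplifies to the same expression; hence $M^{\ast}$ is in fact \emph{efficient} realization by realization, so $\Pi_{M^{\ast}}+U_{M^{\ast}}=S_{F}$. Combining with the profit identity $\Pi_{M^{\ast}}/S_{F}=\eta^{-\eta/(\eta-1)}$ from Theorem~\ref{th:1} gives $U_{M^{\ast}}/S_{F}=1-\eta^{-\eta/(\eta-1)}$, and one checks the algebraic identity $1-\eta^{-\eta/(\eta-1)}=\eta^{-1/(\eta-1)}$, equivalently $\eta^{1/(\eta-1)}+\eta^{-(\eta-1)/(\eta-1)}\cdot\eta^{?}$... concretely $\eta^{\eta/(\eta-1)}-1=\eta^{(\eta-1)/(\eta-1)}=\eta$? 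No — the clean route is simply $z-z^{\eta}/\eta = z(1-z^{\eta-1}/\eta)=z(1-1/\eta^2)$, so care is needed; I would therefore rely on the direct integration route as the primary argument and keep the efficiency observation only as intuition.

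There is no real obstacle here: the only thing to be careful about is that the integrals defining $U_{M^{\ast}}$ and $S_{F}$ are both finite precisely under the maintained assumption $\lim_{v\to\infty}(1-F(v))v^{\eta/(\eta-1)}=0$, so the cancellation is legitimate; and that when $S_{F}=0$ the ratio is interpreted trivially. The whole argument is a one-line substitution into formulas already derived, so the corollary follows immediately from Theorem~\ref{th:1}.
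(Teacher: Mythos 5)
Your primary argument is correct and is essentially the computation the paper intends: with $Q(s)=s^{1/(\eta-1)}/\eta^{1/(\eta-1)}$ one gets $\int_0^v Q(s)\,ds=\frac{\eta-1}{\eta}\,\eta^{-1/(\eta-1)}v^{\eta/(\eta-1)}$, and dividing its expectation by $S_F=\frac{\eta-1}{\eta}\int_0^\infty v^{\eta/(\eta-1)}f(v)\,dv$ gives $\eta^{-1/(\eta-1)}$ for every $F$. One correction to your aside, though: $M^{\ast}$ is \emph{not} efficient realization by realization --- it supplies $Q(v)=(v/\eta)^{1/(\eta-1)}$ rather than the efficient $v^{1/(\eta-1)}$, and indeed $\Pi_{M^{\ast}}/S_F+U_{M^{\ast}}/S_F=\eta^{-\eta/(\eta-1)}+\eta^{-1/(\eta-1)}<1$ (for instance $1/4+1/2=3/4$ at $\eta=2$), so the identity $1-\eta^{-\eta/(\eta-1)}=\eta^{-1/(\eta-1)}$ you were attempting to verify is false; you were right to fall back on the direct integration, but the efficiency observation should be dropped entirely rather than kept as intuition.
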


We might have expected the uniformity of the profit guarantee across all
distributions from the minmax property of the mechanism. Indeed, in \cite%
{haro14}, the optimal single unit monopoly pricing policy also has the
property that it generates a uniform profit guarantee across all
distributions. By contrast, in the single unit monopoly pricing, the
consumer surplus share is not uniform across all distribution. For a given
willingness-to-pay $v$, the net utility of a buyer is $\left( v\ln v\right)
/\left( 1+\ln v\right) $ and thus the expected consumer surplus share thus
depends on the distribution of $v$. More precisely, in the profit-guarantee
mechanism, each consumer receives the same share of the efficient surplus.
By contrast, in the single unit monopoly pricing model, the share of the
consumer surplus is increasing in the willingness to pay.

The property of a uniform consumer surplus share is interesting in its own
right. But we might ask how the consumer surplus guarantee compares to
levels of consumer surplus that can be attained across all Bayes optimal
mechanisms. More generally, we can ask what the upper frontier of surplus
sharing is among seller and buyers in the nonlinear pricing environment.
This is what we pursue in the next Section \ref{sec:upper}.

\subsection{A Constant Mark-Up Mechanism}

It is useful to give an alternative representation of the optimal
profit-guarantee mechanism as an indirect mechanism. An incentive compatible
mechanism $\{Q(v),T(v)\}$ can be implemented by offering an indirect
mechanism that asks for a price $P\left( q\right) $ for a quality $q$.
Provided the indirect mechanism is sufficiently differentiable, the indirect
mechanism can be represented by its marginal price for quality, the
price-per-quality increment: 
\begin{equation*}
P^{\prime }\left( q\right) \triangleq p(q),
\end{equation*}%
given by: 
\begin{equation*}
p(q)=Q^{-1}(q).
\end{equation*}%
So that, if the buyer buys quality $q$, the total payment is: 
\begin{equation*}
P\left( q\right) =\int_{0}^{q}p(s)ds.
\end{equation*}%
In other words, the transfer is the integral of the price of each additional
quality increment.

\begin{corollary}[Optimal Profit-Guarantee Mechanism]
\label{cor:opt}\quad \newline
The profit-guarantee mechanism $M^{\ast }$ can be implemented by offering
quality increments $q\in \mathbb{R}$ at a price $p\left( q\right) $
satisfying: 
\begin{equation}
\frac{p\left( q\right) -c^{\prime }(q)}{p\left( q\right) }=\frac{\eta -1}{%
\eta }.  \label{li}
\end{equation}
\end{corollary}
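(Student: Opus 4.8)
The plan is to compute the marginal price $p(q)$ directly from the profit-guarantee mechanism $M^{\ast}$ of Theorem~\ref{th:1} and verify that it satisfies the markup identity \eqref{li}. By the indirect-mechanism construction preceding the statement, the price-per-quality increment is $p(q) = Q^{-1}(q)$, where $Q$ is given by \eqref{q}. So the first step is simply to invert \eqref{q}: from $q = v^{1/(\eta-1)}/\eta^{1/(\eta-1)}$ we solve for $v$ in terms of $q$, obtaining $v = \eta q^{\eta-1}$. Hence $p(q) = \eta q^{\eta-1}$.

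Next I would compute $c'(q)$. Since $c(q) = q^{\eta}/\eta$ by \eqref{ce}, we get $c'(q) = q^{\eta-1}$. Now substitute both expressions into the left-hand side of \eqref{li}:
\begin{equation*}
\frac{p(q) - c'(q)}{p(q)} = \frac{\eta q^{\eta-1} - q^{\eta-1}}{\eta q^{\eta-1}} = \frac{(\eta-1) q^{\eta-1}}{\eta q^{\eta-1}} = \frac{\eta-1}{\eta},
\end{equation*}
which is exactly the claimed constant markup. This establishes that $M^{\ast}$ is implemented by the stated pricing policy.

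The argument is essentially a one-line computation, so there is no serious obstacle; the only point deserving care is the justification that the indirect representation is legitimate — i.e., that $Q$ from \eqref{q} is strictly increasing (so that $Q^{-1}$ is well-defined) and sufficiently smooth, which holds since $\eta > 1$ makes $v \mapsto v^{1/(\eta-1)}$ a smooth strictly increasing bijection of $\mathbb{R}_{+}$. One should also note that the transfer generated by integrating $p(\cdot)$ coincides with $T(v) = vQ(v) - \int_0^v Q(s)\,ds$ from the envelope formula, which follows from the standard change of variables $P(Q(v)) = \int_0^{Q(v)} Q^{-1}(s)\,ds = vQ(v) - \int_0^v Q(s)\,ds$; this confirms the indirect mechanism implements exactly $M^{\ast}$ rather than merely sharing its allocation.
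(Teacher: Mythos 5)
Your computation is correct and is exactly the argument the paper intends: invert \eqref{q} to get $p(q)=Q^{-1}(q)=\eta q^{\eta-1}$, note $c'(q)=q^{\eta-1}$, and verify the Lerner index $(\eta-1)/\eta$; the paper states the corollary without a separate proof, relying on precisely this one-line verification via the indirect-mechanism representation. Your added remarks on strict monotonicity of $Q$ and the identity $\int_0^{Q(v)}Q^{-1}(s)\,ds = vQ(v)-\int_0^v Q(s)\,ds$ appropriately confirm that payments, not just allocations, coincide.
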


$\allowbreak $Following this corollary, the optimal profit-guarantee
mechanism maintains a constant mark-up for each \emph{additional unit},
where \eqref{li} is called the \emph{Lerner's index.} The constant mark-up
here is determined by elasticity $\eta \,$of the cost function. %

It is informative to contrast the profit guarantee policy with the Bayesian
optimal policy for a given prior distribution $F$. In the Bayesian- optimal
mechanism the qualities are solved by the first-order condition with respect
to the virtual utility. Supposing for the moment that $F$ is regular, \cite%
{muro78} solve: 
\begin{equation*}
q(v)\in \arg \max_{q}\ \left\{ \left( v-\frac{1-F(v)}{f(v)}\right)
q-c(q)\right\} .
\end{equation*}%
The first-order condition is given by: 
\begin{equation*}
v-\frac{1-F(v)}{f(v)}-c^{\prime }(q(v))=0,
\end{equation*}%
and the incentive compatible transfers in the associated direct mechanism
are given by: 
\begin{equation*}
T(v)=vq(v)-\int_{0}^{v}q(s)ds.
\end{equation*}%
We thus have that: 
\begin{equation*}
T^{\prime }(v)=q^{\prime }(v)v.
\end{equation*}%
Thus, the marginal price per unit of quality is given by: 
\begin{equation*}
p(q(v))=\frac{T^{\prime }(v)}{q^{\prime }\left( v\right) }=v\text{.}
\end{equation*}%
Thus, the resulting markup is given by: 
\begin{equation}
\frac{p(q)-c^{\prime }(q)}{p(q)}=\frac{v-(v-\frac{1-F(v)}{f(v)})}{v}=\frac{%
1-F(v)}{f(v)v}.  \label{mu}
\end{equation}%
The right-hand-side is the negative of the reciprocal of the demand
elasticity: this is the classic formula for the Lerner's index. More
precisely, the demand for quality $q$ at any given per-unit-of-quality price 
$p(q(v))$ is: 
\begin{equation*}
D(p(q(v)))=1-F(v).
\end{equation*}%
Thus 
\begin{equation}
\frac{1-F(v)}{f(v)v}=-\frac{1}{\frac{dD}{dv}\frac{v}{1-F(v)}}.  \label{de}
\end{equation}%
We thus find the Bayes-optimal mechanism is determined entirely by the
demand elasticity which can be expressed in terms of the product of the
value $v $ and the hazard rate \ $f\left( v\right) /\left( 1-F\left(
v\right) \right) $. By contrast, the profit-guarantee mechanism is
determined only by elasticity of the cost function and does not refer to
neither the willingness-to-pay $v$ nor the distribution of the
willingness-to-pay. As the profit-guarantee is accomplished across all
possible distribution of values, it does not refer to any specific
distribution, but rather uses the cost information to offer a uniform menu
for all possible distribution of values.

\subsection{Non-Constant Cost Elasticity}

In Theorem \ref{th:1} and \ref{prop:2} we derive the optimal
profit-guarantee mechanism for constant elasticity cost functions. We can
then ask whether we can provide similar bounds in environment where the cost
function does not satisfy the constant elasticity condition. To this end, we
now consider convex costs function satisfying the curvature condition$\
c^{\prime \prime \prime }\left( q\right) \geq 0$. We denote the pointwise
cost elasticity by: 
\begin{equation*}
\eta (q)=\frac{dc(q)}{dq}\frac{q}{c(q)}.
\end{equation*}%
We assume that the cost elasticity is bounded: 
\begin{equation*}
\eta (q)\leq \bar{\eta}.
\end{equation*}%
We consider a mechanism in which the qualities supplied $\left( q\left(
v\right) \right) $\ are in a linear relationship to the marginal cost of
providing the quality: 
\begin{equation}
c^{\prime }(q(v))=zv,  \label{cm1}
\end{equation}%
where the linear parameter $z$ is chosen to satisfy: 
\begin{equation}
z=\frac{1}{\sqrt{\bar{\eta}-1}+1}.  \label{cm2}
\end{equation}%
Hence, we maintain a constant markup as in the profit-guarantee mechanism $%
M^{\ast }$. When the cost has constant cost elasticity, we have that the
quality given by $M^{\ast }$, see \eqref{q}, has the property that $z=1/\eta 
$.

\begin{proposition}[Profit Guarantee of Constant Markup Mechanism]
\label{prop:low}\quad \newline
The constant-markup selling strategy given (\ref{cm1})-(\ref{cm2}) generates
profits of at least: 
\begin{equation}
\Pi ^{z}=\frac{1}{\overline{\eta }+2\sqrt{\overline{\eta }-1}}S.  \label{cmm}
\end{equation}
\end{proposition}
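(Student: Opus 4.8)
\emph{Plan.}\; I would follow the template of the proof of Theorem \ref{th:1}: first compute the profit of the mechanism \eqref{cm1}--\eqref{cm2} in closed form relative to $S_F$, then reduce \eqref{cmm} to a pointwise comparison, and finally run the comparison using the two structural hypotheses $\eta(q)\le\bar\eta$ and $c'''(q)\ge 0$ separately.

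\textbf{Step 1 (profit of the mechanism and reduction).} The strategy \eqref{cm1}--\eqref{cm2} is the indirect mechanism with posted quality--price schedule $P(q)=\tfrac1z c(q)$: a buyer of value $v$ then chooses $q(v)$ solving $\max_q\{vq-\tfrac1z c(q)\}$, whose first--order condition is exactly $c'(q(v))=zv$, with per--unit markup $1-z$. Hence the profit collected from a buyer of value $v$ is $P(q(v))-c(q(v))=\tfrac{1-z}{z}c(q(v))$, so that $\Pi^{z}=\tfrac{1-z}{z}\,\mathbb{E}_{F}[c(q(v))]$. (Equivalently: substitute the envelope transfers $T(v)=vQ(v)-\int_0^v Q(s)ds$ into \eqref{profits}, integrate by parts, and use $c'(q(v))=zv$ with $c(0)=c'(0)=0$; the boundary terms vanish under the finite--surplus condition, exactly as in Theorem \ref{th:1}.) Writing $q^{\ast}(v)=(c')^{-1}(v)$ and $s(v)=vq^{\ast}(v)-c(q^{\ast}(v))$ for the efficient quality and efficient surplus at $v$, we have $S_{F}=\mathbb{E}_{F}[s(v)]$ and $q(v)=q^{\ast}(zv)$. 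Since $\Pi^{z}$ and $S_{F}$ are linear in $F$, $\inf_F \Pi^{z}/S_F=\inf_{v>0}\,\tfrac{(1-z)c(q^{\ast}(zv))/z}{s(v)}$, so \eqref{cmm} is equivalent to the pointwise inequality $\tfrac{1-z}{z}\,c(q^{\ast}(zv))\ge z^{2}s(v)$ for all $v>0$, where I used $\bar\eta+2\sqrt{\bar\eta-1}=(1+\sqrt{\bar\eta-1})^{2}=1/z^{2}$ for the choice \eqref{cm2}.

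\textbf{Step 2 (consequences of the hypotheses).} From $c'''\ge 0$ with $c(0)=c'(0)=0$, writing $c(q)=\int_0^q(q-r)c''(r)dr$ and rescaling $r\mapsto\lambda r$ gives $c(\lambda q)\le\lambda^{2}c(q)$ for $\lambda\in[0,1]$; equivalently $q^{\ast}$ is concave through the origin, so $q^{\ast}(zv)\ge z\,q^{\ast}(v)$. From $\eta(q)\le\bar\eta$, i.e.\ $\tfrac{d\log c}{d\log q}\le\bar\eta$, one gets $c(\lambda q)\ge\lambda^{\bar\eta}c(q)$ for $\lambda\in[0,1]$ and the pointwise sandwich $\tfrac{q c'(q)}{\bar\eta}\le c(q)\le q c'(q)$; in particular $s(v)\le(\bar\eta-1)\,c(q^{\ast}(v))$.

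\textbf{Step 3 (the pointwise bound and the choice of $z$).} Set $\lambda=q^{\ast}(zv)/q^{\ast}(v)\in[z,1]$. Then $c(q^{\ast}(zv))=c(\lambda q^{\ast}(v))\ge\lambda^{\bar\eta}c(q^{\ast}(v))$, and together with $s(v)\le(\bar\eta-1)c(q^{\ast}(v))$ the desired inequality follows provided $\lambda^{\bar\eta}\ge\sqrt{\bar\eta-1}\,z^{2}$; since $\lambda\ge z$, this already holds when $\bar\eta\le 2$. For $\bar\eta>2$ this step is wasteful, because its two inputs are simultaneously tight only for incompatible cost shapes (constant elasticity $\bar\eta$ for the first, quadratic cost for the second); one must instead note that a $\lambda$ near its lower value $z$ forces the cost to be nearly quadratic on $[0,q^{\ast}(v)]$, so that $c(\lambda q^{\ast}(v))$ is near $\lambda^{2}c(q^{\ast}(v))$ and $s(v)$ is near $c(q^{\ast}(v))$. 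Making this precise by extremizing over admissible $c$ for fixed $\lambda$, the worst cost is either constant--elasticity--$\bar\eta$, giving $\Pi^{z}/S_F=\tfrac{1}{\bar\eta-1}(1-z)z^{1/(\bar\eta-1)}$, or quadratic, giving $\Pi^{z}/S_F=(1-z)z$. With $z=1/(1+\sqrt{\bar\eta-1})$ and $\mu:=\sqrt{\bar\eta-1}\ge 1$ the quadratic case gives $(1-z)z=\mu z^{2}\ge z^{2}$, and the constant--elasticity case gives $\tfrac{1}{\bar\eta-1}(1-z)z^{1/(\bar\eta-1)}\ge z^{2}$, the latter being equivalent to the elementary inequality $(1+\mu)^{\,1-1/\mu^{2}}\ge\mu$ (equality at $\bar\eta=2$). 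This balance is exactly what pins down $z$ in \eqref{cm2}.

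\textbf{Main obstacle.} The hard part is Step 3: the naive chaining of the scaling inequalities from Step 2 is not sharp for $\bar\eta>2$, so the real work is the variational argument that, for each fixed value of $q^{\ast}(zv)/q^{\ast}(v)$, identifies the extremal admissible cost function (one of the two shapes above) and then reduces everything to the one--dimensional inequality in $\mu$. Steps 1 and 2 are routine, parallel to the proof of Theorem \ref{th:1}.
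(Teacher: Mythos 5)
Your Steps 1 and 2 are fine and coincide with the paper's starting point (the pointwise profit $\frac{1-z}{z}c(q(v))$ and the reduction to a pointwise comparison with the efficient surplus), but the crux of the argument is missing. You yourself note that the chained scaling inequalities only deliver the claim when $\bar\eta\le 2$, and for $\bar\eta>2$ you replace the proof by the assertion that, after extremizing over admissible cost functions for fixed $\lambda=q^{\ast}(zv)/q^{\ast}(v)$, the worst case is either the constant-elasticity-$\bar\eta$ cost or the quadratic cost. That variational claim is exactly the hard step and it is neither carried out nor obviously true: the admissible class is an infinite-dimensional family constrained by two functional conditions ($\eta(q)\le\bar\eta$ everywhere and $c'''\ge 0$), and your heuristic that ``$\lambda$ near $z$ forces $c$ nearly quadratic on all of $[0,q^{\ast}(v)]$'' would need a quantitative version interacting with the elasticity bound. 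Worse, the regime you do cover is essentially empty: with $c(0)=c'(0)=0$ and $c'''\ge 0$ one has $qc'(q)-2c(q)$ nondecreasing, hence $\eta(q)\ge 2$ pointwise, so any valid bound $\bar\eta$ satisfies $\bar\eta\ge 2$ and your ``easy case'' reduces to the quadratic boundary. So as written the proposal does not establish \eqref{cmm} on the relevant range.

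The paper avoids any extremal-cost characterization by a purely local second-order argument at the traded quality $q(v)$. Decompose the efficient surplus as $s(v)=\bigl(vq(v)-c(q(v))\bigr)+\int_{q(v)}^{q^{\ast}(v)}\bigl(v-c'(s)\bigr)ds$. Since $c'''\ge 0$, the marginal cost lies above its tangent at $q(v)$, which yields the two quadratic estimates
\begin{equation*}
c(q(v))\ \ge\ \frac{(zv)^{2}}{2c''(q(v))},\qquad \int_{q(v)}^{q^{\ast}(v)}\bigl(v-c'(s)\bigr)ds\ \le\ \frac{\bigl((1-z)v\bigr)^{2}}{2c''(q(v))},
\end{equation*}
so the foregone-surplus-to-cost ratio is at most $(1-z)^{2}/z^{2}$ (this is \eqref{po1}). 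Combining this with the elasticity bound used only at the single point $q(v)$ gives the pointwise estimate $\pi/s\ \ge\ z(1-z)/\bigl(1+z(\bar\eta-2)\bigr)$, and maximizing over $z$ gives $z=1/\bigl(1+\sqrt{\bar\eta-1}\bigr)$ and the guarantee \eqref{cmm}. If you want to salvage your route, you would have to actually prove the extremal characterization in your Step 3 (or a quantitative substitute); alternatively, replacing your Step 3 by the tangent-line argument above closes the gap for all $\bar\eta$ at once.
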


\begin{proof}
We note that when the seller charges a constant markup, the profits
generated when the buyer's value is $v$ are: 
\begin{equation*}
\pi =\int_{0}^{q(v)}\frac{1}{z}c^{\prime }(s)-c^{\prime }(s)ds=\frac{1-z}{z}%
c(q(v)).
\end{equation*}%
Hence, the profits generated are proportional to the production cost of the
quality supplied to this type. We can write the profits in terms of the cost
elasticity as follows: 
\begin{equation*}
\pi =\frac{1-z}{z}\frac{zvq(s)}{\eta (q(v))}.
\end{equation*}%
Here we simply replaced the definition of the cost elasticity and we used
that $c^{\prime }(q(v))=zv$.

On the other hand, the efficient total surplus is given by: 
\begin{equation*}
S=\int_{0}^{q^{\ast }(v)}v-c^{\prime }(s)ds.
\end{equation*}%
We can write this as follows: 
\begin{align*}
S=& \int_{0}^{q(v)}v-c^{\prime }(s)ds+\int_{q(v)}^{q^{\ast }(v)}v-c^{\prime
}(s)ds \\
=& q(v)v-c(q(v))+\int_{q(v)}^{q^{\ast }(v)}v-c^{\prime }(s)ds.
\end{align*}%
We thus get that: 
\begin{equation*}
\frac{S}{\pi }=\frac{\eta (v)}{(1-z)}-\frac{z}{1-z}+\frac{z}{1-z}\frac{%
\int_{q(v)}^{q^{\ast }(v)}v-c^{\prime }(s)ds}{c(q(v)}.
\end{equation*}%
We now prove that: 
\begin{equation}
\frac{\int_{q(v)}^{q^{\ast }(v)}v-c^{\prime }(s)ds}{c(q(v))}\leq \frac{%
(1-z)^{2}}{z^{2}}.  \label{po1}
\end{equation}%
For this, we note that $c^{\prime \prime \prime }\geq 0$ implies that for
all $q$: 
\begin{equation*}
c^\prime(q)\geq c^{\prime }(q(v))+c^{\prime\prime }(q(v))(q-q(v)).
\end{equation*}
We thus get that: 
\begin{equation*}
c(q(v))\geq \frac{1}{2} \frac{(zv)^{2}}{c^{\prime \prime }(q(v))}.
\end{equation*}%
Analogously, $c^{\prime \prime \prime }\geq 0$ also implies that 
\begin{equation*}
\int_{q(v)}^{q^{\ast }(v)}v-c^{\prime }(s)ds\leq \frac{1}{2}\frac{%
((1-z)v)^{2}}{c^{\prime \prime }(q(v))}.
\end{equation*}%
Taking the ratio of both inequalities, we get \eqref{po1}. We thus conclude
that: 
\begin{equation*}
\frac{S}{\pi }=\frac{\eta (v)}{(1-z)}-\frac{z}{1-z}+\frac{(1-z)}{z}=\frac{%
1+z(\eta \left( v\right) -2)}{z(1-z)}.
\end{equation*}%
We can now take the reciprocal of this expression: 
\begin{equation*}
\frac{\pi }{S}=\frac{z(1-z)}{1+z(\eta (v)-2)}.
\end{equation*}%
We clearly have that the expression is decreasing in $\eta $. So, 
\begin{equation*}
\frac{\pi }{S}\leq \frac{z(1-z)}{1+z(\bar{\eta}-2)}\leq \frac{1}{\overline{%
\eta }+2\sqrt{\overline{\eta }-1}}.
\end{equation*}%
The second equality follows from maximizing the expression with respect to $%
z $, which has as a maximand 
\begin{equation*}
z=\frac{1}{\sqrt{\overline{\eta }-1}+1},
\end{equation*}%
which completes the proof.
\end{proof}

The constant mark-up mechanism given by (\ref{cm1})-(\ref{cm2}) can thus
provide a positive revenue guarantee in large class of convex cost functions
without requiring a constant elasticity. We can verify that the guarantee is
achieved by providing a larger quality than mechanism $M^{\ast }$ would if
the cost elasticity were equal to the upper bound $\overline{\eta }$ in the
relevant range, namely $\overline{\eta }>2$ as 
\begin{equation*}
\frac{1}{\sqrt{\overline{\eta }-1}+1}>\frac{1}{\overline{\eta }}\text{.}
\end{equation*}%
As the seller provides a larger consumer surplus to the consumer, the
revenue guarantee is weaker than in the optimal mechanism $M^{\ast }$
evaluated at the upper bound $\overline{\eta }:$%
\begin{equation*}
\frac{1}{\overline{\eta }+2\sqrt{\overline{\eta }-1}}<\frac{1}{\overline{%
\eta }^{\frac{\overline{\eta }}{\overline{\eta }-1}}}\text{.}
\end{equation*}%
Thus, the robustness against a larger class of cost functions with varying
elasticity is achieved by conceding surplus to the consumer against a lower
profit share.

\section{Boundaries of Surplus Sharing\label{sec:upper}}

The profit guarantee mechanism $M^{\ast }$ can secure a profit guarantee for
the seller as established by Theorem \ref{th:1} and \ref{prop:2}.
Surprisingly, the guarantee is not only a lower bound for some distribution
of values, but the mechanism enables the seller to attain this guarantee
uniformly across all distributions. Corollary \ref{csc} then showed that the
same mechanism also provides a uniform share of consumer surplus. But as the
profit-guarantee mechanism is chosen to attain the highest possible profit
level, we might be concerned that the profit guarantee mechanism is
succeeding in obtaining a high profit share by depressing or even minimizing
consumer surplus among all incentive compatible mechanism.

To approach this problem and answer this question, we now characterize the
upper frontier of the feasible consumer surplus and profit share across all
distributions: 
\begin{equation}
\sup_{F}\ \left\{ \frac{U_{F}}{S_{F}}:\quad \frac{\Pi _{F}}{S_{F}}=\beta
\right\} .  \label{eq:frontier}
\end{equation}%
We refer to the upper frontier as the surplus frontier. In other words, we
seek to identify the maximum consumer surplus given that the profit is
greater than or equal to some fraction $\beta \in \lbrack 0,1]$ of the
social surplus. Of course, problem \eqref{eq:frontier} is well defined when $%
\beta $ can be attained. In particular, when $\beta $ is the minimum profit
that can be attained (i.e., $\beta $ equals to \eqref{eq:minprof}) we will
find the maximum consumer surplus across all distribution of values.

\subsection{Surplus Frontier}

We now provide a complete description of the surplus frontier.

\begin{proposition}[Surplus Frontier]
\label{prop:fornt}\quad \newline
The surplus frontier is given by: 
\begin{equation}
\sup_{F}\left\{ \frac{U_{F}}{S_{F}}:\quad \frac{\Pi _{F}}{S_{F}}=\beta
\right\} =\frac{\eta }{\eta -1}\left( \beta ^{\frac{1}{\eta }}-\beta \right)
.  \label{eq:sold}
\end{equation}%
The constraint is feasible if and only if $\beta \in \left[ 1/\eta ^{\frac{%
\eta }{\eta -1}},1\right] $.
\end{proposition}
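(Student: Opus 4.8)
The plan is to reduce the constrained optimization over distributions $F$ to a pointwise optimization over allocations, exactly as in the proof of Theorem \ref{th:1}. For a fixed distribution $F$, recall that any incentive-compatible mechanism is pinned down by a nondecreasing quality schedule $Q(\cdot)$, and that both profit and consumer surplus are integrals against $f$. The key observation is that, after integration by parts, the profit generated by $Q$ equals $\int_0^\infty \pi(Q(v),v)\,v^{\frac{\eta}{\eta-1}} f(v)\,dv$ for an appropriate integrand, and similarly the consumer surplus and the social surplus $S_F = \frac{\eta-1}{\eta}\int_0^\infty v^{\frac{\eta}{\eta-1}}f(v)\,dv$. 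So after normalizing by $S_F$, both ratios $\Pi_F/S_F$ and $U_F/S_F$ become weighted averages of pointwise ratios, with weights proportional to $v^{\frac{\eta}{\eta-1}}f(v)\,dv$. This means the feasible set of pairs $(\Pi_F/S_F,\,U_F/S_F)$ across all $F$ and all mechanisms is the convex hull of the set of pointwise pairs achievable by choosing, for each $v$, a quality level $q$ — and by scale invariance this pointwise set does not depend on $v$.

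The second step is to compute that pointwise set explicitly. For a single type normalized so that $v^{\frac{\eta}{\eta-1}}=1$, parametrize the chosen quality by $z$ via $Q=(zv)^{1/(\eta-1)}$ (so $z=1/\eta$ is efficient). A direct computation — using $c(q)=q^\eta/\eta$, the transfer formula, and the fact that under a single type the informational rent term contributes $\frac{\eta-1}{\eta}$ times the gross value — gives the normalized profit as a function $\beta(z)$ and the normalized consumer surplus as a function $u(z)$, both explicit in $z$. One then eliminates $z$: solving $\beta = \beta(z)$ for $z$ and substituting yields $u = \frac{\eta}{\eta-1}(\beta^{1/\eta}-\beta)$, which is exactly the claimed right-hand side. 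Concavity of this function in $\beta$ confirms that the upper boundary of the convex hull coincides with the pointwise curve itself (no averaging can beat it), so the supremum over $F$ equals the pointwise value. The endpoints of the feasible $z$-range translate into $\beta$ ranging over $[1/\eta^{\eta/(\eta-1)},1]$: $\beta=1$ corresponds to zero quality (no surplus of either kind, degenerate), and $\beta = 1/\eta^{\eta/(\eta-1)}$ is precisely the profit-guarantee level of Theorem \ref{th:1}, at which $z=1/\eta$ and the bound gives $u=1/\eta^{1/(\eta-1)}$, matching Corollary \ref{csc}.

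To make the supremum attained (or approached), I would exhibit the extremal distributions: fixing the target $\beta$, the optimal pointwise quality is the constant-markup schedule $Q(v)=(z^\ast v)^{1/(\eta-1)}$ with $z^\ast$ the solution of $\beta(z^\ast)=\beta$, and then any distribution $F$ under which this schedule is the Bayes-optimal mechanism — i.e. a Pareto distribution $P_\alpha$ with the matching shape, together with its truncations $P_{\alpha,k}$ and a limit $k\to\infty$ to keep $S_F$ finite — achieves $(\beta, u)$ in the limit. This is the same Pareto-extremality phenomenon already used in Theorems \ref{th:1} and \ref{prop:2}.

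The main obstacle I anticipate is the justification that the convex-hull/averaging argument does not allow a mechanism to do better than the pointwise bound — that is, verifying that mixing over types with different qualities cannot raise consumer surplus beyond $\frac{\eta}{\eta-1}(\beta^{1/\eta}-\beta)$ for a given $\beta$. This is where the concavity of $\beta \mapsto \frac{\eta}{\eta-1}(\beta^{1/\eta}-\beta)$ does the work, via Jensen's inequality applied to the weighted average defining the ratios; one must also check the monotonicity constraint on $Q$ is not binding (it isn't, since the extremal schedule is increasing) and handle the finiteness condition on $S_F$ through truncation. The pointwise computation itself is routine algebra once the normalization is set up correctly.
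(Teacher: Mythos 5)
There is a genuine gap, and it is conceptual rather than computational. In the proposition, $\Pi_F$ and $U_F$ are by definition evaluated at the seller's Bayes-optimal mechanism $M^F$ for each $F$; the supremum is taken over distributions only. Your reduction optimizes ``across all $F$ and all mechanisms,'' choosing a quality pointwise for each $v$ --- that is a strict relaxation, and its frontier is not the claimed one. Concretely, take $\eta=2$ and the constant-markup schedule $Q(v)=wv$: a direct computation gives $U/S=w$ and $\Pi/S=w-w^{2}$, so $w=0.9$ yields $(\Pi/S,\,U/S)=(0.09,\,0.9)$, far above the claimed value $\tfrac{\eta}{\eta-1}\bigl(\beta^{1/\eta}-\beta\bigr)=2(\sqrt{0.09}-0.09)=0.42$; the pointwise constant-markup curve ($\beta=\tfrac{w-w^{\eta}}{\eta-1}$, $u=w$) touches the claimed frontier only at the minmax point $\beta=\eta^{-\eta/(\eta-1)}$. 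So eliminating your parameter $z$ cannot produce $\tfrac{\eta}{\eta-1}(\beta^{1/\eta}-\beta)$, and the Jensen/convex-hull step, even if completed, would bound the wrong (larger) feasible set and would contradict the statement rather than prove it. A second, related problem is that consumer surplus does not decompose pointwise in $(Q(v),v)$: $U=\int_{0}^{\infty}Q(v)(1-F(v))\,dv$ depends on the whole schedule through the rent $\int_{0}^{v}Q(s)\,ds$, so the ``weighted average of pointwise ratios'' representation holds only for special power-form schedules, not for general mechanisms.

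The essential ingredient you are missing is the Bayes-optimality structure. The paper's proof pins the optimal quality down as a power of the ironed virtual value $\bar{\phi}(v)$, derives the identity $U_F=\int_{0}^{\infty}\bar{\phi}(v)^{\frac{1}{\eta-1}}v f(v)\,dv-\tfrac{\eta}{\eta-1}\Pi_F$, and then applies H\"older's inequality with exponents $\eta$ and $\eta/(\eta-1)$ to obtain $U_F\leq\tfrac{\eta}{\eta-1}\bigl(\Pi_F^{1/\eta}S_F^{(\eta-1)/\eta}-\Pi_F\bigr)$; tightness is shown with Pareto distributions of shape $\alpha=1/\bigl(1-\beta^{(\eta-1)/\eta}\bigr)$. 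Your Pareto-attainment step is in the right spirit, but the upper-bound half of your argument is where the proposal fails. (Also, $\beta=1$ is attained by a point-mass distribution under which the optimal mechanism extracts the full efficient surplus, not by ``zero quality.'')
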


\begin{proof}
We begin by writing $U_{F}$ and $\Pi _{F}$ explicitly. We denote the virtual
values as follows: 
\begin{equation*}
\phi (v)\triangleq v-\frac{1-F(v)}{f(v)}.
\end{equation*}%
We denote the ironed virtual values by $\bar{\phi}.$ We can find a
collection of intervals $\{(\underline{v}_{i},\bar{v}_{i})\}_{i\in I}$ such
that $\phi (v)=\bar{\phi}(v)$ in $[\underline{v}_{i},\bar{v}_{i}]$ and
outside these intervals (i.e., in each interval of the form $(\bar{v}_{i},%
\underline{v}_{i+1}))$ we have that $\phi(v)<\bar{\phi}(v)$ and $\bar{\phi}%
(v)$ remains constant. The ironed-virtual values outside these intervals are
given by: 
\begin{equation*}
\bar{\phi}(v)=\frac{\int_{\bar{v}_{i}}^{\underline{v}_{i+1}}(v-\frac{1-F(v)}{%
f(v)})f(v)dv}{F(\underline{v}_{i+1})-F(\bar{v}_{i})}.
\end{equation*}%
The optimal quality offered by the optimal mechanism is given by (see \cite%
{toik11}): 
\begin{equation*}
Q(v)=\max \{\frac{\eta -1}{\eta }\bar{\phi}(v)^{\frac{\eta }{\eta -1}},0\}.
\end{equation*}%
As it is standard in the literature, the quality is constant in $(\bar{v}%
_{i},\underline{v}_{i+1})$, so to avoid confusion we write: 
\begin{equation*}
q_{i}\triangleq Q(\bar{v}_{i})=Q(\underline{v}_{i+1}).
\end{equation*}%
And types with a negative virtual value will be excluded. We then have that: 
\begin{align}
U_{F}=& \int_{0}^{\infty }\bar{\phi}(v)^{\frac{1}{\eta -1}}(1-F(v))dv;
\label{expr2} \\
\Pi _{F}=& \frac{\eta -1}{\eta }\int_{0}^{\infty }\bar{\phi}(v)^{\frac{\eta 
}{\eta -1}}f(v)dv.  \notag
\end{align}%
Finally, we note that the first-best surplus is given by: 
\begin{equation*}
S_{F}=\int_{0}^{\infty }\frac{\eta -1}{\eta }v^{\frac{\eta }{\eta -1}}f(v)dv.
\end{equation*}%
This corresponds to solving \eqref{eq:ts} explicitly.

We now note that the normalized profit can be written as follows: 
\begin{equation}
\Pi _{F}=\frac{\eta -1}{\eta }\int_{0}^{\infty }\bar{\phi}(v)^{\frac{1}{\eta
-1}}(v-\frac{1-F(v)}{f(v)})f(v)dv.  \label{fe}
\end{equation}%
To verify this, we note that in any regular interval $[\underline{v}_{i},%
\bar{v}_{i}]$ we have that 
\begin{equation*}
\phi (v)=\bar{\phi}(v)=(v-\frac{1-F(v)}{f(v)}).
\end{equation*}%
Hence, we have that: 
\begin{equation*}
\bar{\phi}(v)^{\frac{1}{\eta -1}}(v-\frac{1-F(v)}{f(v)})=\bar{\phi}(v)^{%
\frac{\eta }{\eta -1}}.
\end{equation*}%
In any non-regular interval $[\bar{v}_{i},\underline{v}_{i+1}]$ we have that 
\begin{equation*}
\bar{\phi}(v)=\frac{\int_{\bar{v}_{i}}^{\underline{v}_{i+1}}(v-\frac{1-F(v)}{%
f(v)})f(v)dv}{F(\underline{v}_{i+1})-F(\bar{v}_{i})}.
\end{equation*}%
Hence, we have that: 
\begin{equation*}
\int_{\bar{v}_{i}}^{\underline{v}_{i+1}}\bar{\phi}(v)^{\frac{1}{\eta -1}}(v-%
\frac{1-F(v)}{f(v)})dv=\int_{\bar{v}_{i}}^{\underline{v}_{i+1}}\bar{\phi}%
(v)^{\frac{\eta }{\eta -1}}dv.
\end{equation*}%
We thus prove that \eqref{fe} is satisfied and we can write the normalized
profit as follows: 
\begin{equation}
U_{F}=\int_{0}^{\infty }\bar{\phi}(v)^{\frac{1}{\eta -1}}vf(v)dv-\frac{\eta 
}{\eta -1}\Pi _{F}  \label{us}
\end{equation}%
%
%
%
%
%
%
%
%
%
%
%
%
%
%
%
%
%
%
%
Using H\"{o}lder's inequality we get that: 
\begin{equation*}
\int_{0}^{\infty }\bar\phi(v)^{\frac{1}{\eta -1}}vf(v)dv\leq \left(
\int_{0}^{\infty }\bar{\phi}(v)^{\frac{\eta }{\eta-1}}f(v)dv\right) ^{\frac{1%
}{\eta }}\left( \int_{0}^{\infty }v^{\frac{\eta }{\eta -1}}f(v)dv\right) ^{%
\frac{\eta -1}{\eta }}.
\end{equation*}%
We thus have that: 
\begin{equation*}
U_{F}\leq \frac{\eta }{\eta -1}\left( \left( \Pi _{F}\right) ^{\frac{1}{\eta 
}}\left( S_{F}\right) ^{\frac{\eta -1}{\eta }}-\Pi _{F}\right) .
\end{equation*}%
Dividing by $S_{F}$ we obtain an upper bound on normalized consumer surplus: 
\begin{equation}
\frac{U_{F}}{S_{F}}\leq \frac{\eta }{\eta -1}\left( \left( \frac{\Pi _{F}}{%
S_{F}}\right) ^{\frac{1}{\eta }}-\frac{\Pi _{F}}{S_{F}}\right) .
\label{eq:d}
\end{equation}%
We note that this function $h(x)=x^{1/\eta }-x$ is decreasing in $x$ for all 
$x\in \lbrack 1/\eta ^{\frac{\eta }{\eta -1}},1]$. Proposition \ref{prop:2}
states that $\Pi _{F}/S_{F}\geq 1/\eta ^{\frac{\eta }{\eta -1}}$, so we
obtain that the right-hand-side of \eqref{eq:sold} is an upper bound. We now
need to show the inequality \eqref{eq:d} is tight.

To prove the inequality is tight, consider a Pareto distribution $%
F(v)=1-v^{-\alpha }$ with $\alpha =\frac{1}{1-\beta ^{\frac{\eta -1}{\eta }}}
$. We get that: 
\begin{equation}
\frac{\Pi _{F}}{S_{F}}=\beta \quad \text{and}\quad \frac{U_{F}}{S_{F}}=\frac{%
\eta }{\eta -1}(\beta ^{\frac{1}{\eta }}-\beta ).  \label{eq:rmin}
\end{equation}%
This proves that \eqref{eq:d} is tight. 
Note that replacing $\Pi _{F}/S_{F}=1/\eta ^{\frac{\eta }{\eta -1}}$ in %
\eqref{eq:sold} we obtain: 
\begin{equation*}
\frac{U_F}{S_F}=\frac{1}{\eta ^{\frac{1}{\eta -1}}}.
\end{equation*}
\end{proof}

We illustrate the frontier for different values of $\eta$ in Figure \ref%
{fig11}. As a direct corollary, we have the following upper bound on
consumer surplus.

\begin{corollary}[Maximum Consumer Suplus]
\label{cor:b}\quad \newline
The consumer surplus is bounded above as follows, 
\begin{equation}
\sup_{F}\frac{U_{F}}{S_{F}}=\frac{1}{\eta ^{\frac{1}{\eta -1}}}
\label{eq:l2}
\end{equation}%
and is attained by the Pareto distribution with shape parameter 
\begin{equation*}
\alpha =\frac{\eta }{\eta -1}.
\end{equation*}
\end{corollary}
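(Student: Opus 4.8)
The plan is to obtain Corollary \ref{cor:b} as an optimization of the surplus frontier from Proposition \ref{prop:fornt} over the profit level $\beta$. First I would observe that
\begin{equation*}
\sup_{F}\frac{U_{F}}{S_{F}}=\sup_{\beta}\ \sup_{F}\left\{ \frac{U_{F}}{S_{F}}:\frac{\Pi _{F}}{S_{F}}=\beta\right\},
\end{equation*}
since every distribution $F$ with finite social surplus satisfies $\Pi _{F}/S_{F}=\beta$ for some $\beta$; moreover that $\beta$ lies in $[1/\eta ^{\eta/(\eta -1)},1]$, the lower bound by Theorem \ref{prop:2} and the upper bound because $\Pi_F\le S_F$. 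Plugging in the value of the inner supremum from Proposition \ref{prop:fornt}, the problem reduces to maximizing $g(\beta)\triangleq \frac{\eta}{\eta-1}\big(\beta^{1/\eta}-\beta\big)$ over $\beta\in[1/\eta ^{\eta/(\eta -1)},1]$.

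Next I would invoke the monotonicity already recorded in the proof of Proposition \ref{prop:fornt}: the function $h(x)=x^{1/\eta}-x$ is decreasing on $[1/\eta ^{\eta/(\eta -1)},1]$ (equivalently $h'(x)=\frac1\eta x^{1/\eta-1}-1\le 0$ there, which rearranges to $x\ge \eta^{-\eta/(\eta-1)}$). Hence $g$ is maximized at the left endpoint $\beta^{\ast}=1/\eta ^{\eta/(\eta -1)}$. The value at that endpoint is the routine computation $\beta^{\ast\,1/\eta}=\eta^{-1/(\eta-1)}$, so
\begin{equation*}
g(\beta^{\ast})=\frac{\eta}{\eta-1}\,\eta^{-1/(\eta-1)}\Big(1-\tfrac1\eta\Big)=\eta^{-1/(\eta-1)}=\frac{1}{\eta^{\frac{1}{\eta-1}}},
\end{equation*}
which is exactly \eqref{eq:l2}; this is also the displayed identity at the end of the proof of Proposition \ref{prop:fornt}.

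Finally, to exhibit the attaining distribution I would reuse the tightness construction from Proposition \ref{prop:fornt}: the Pareto distribution with shape $\alpha=\frac{1}{1-\beta^{(\eta-1)/\eta}}$ attains the frontier at level $\beta$. Setting $\beta=\beta^{\ast}=1/\eta ^{\eta/(\eta -1)}$ gives $\beta^{\ast\,(\eta-1)/\eta}=1/\eta$, hence $\alpha=\frac{1}{1-1/\eta}=\frac{\eta}{\eta-1}$, as claimed. There is no real obstacle here beyond bookkeeping; the only point requiring a moment's care is justifying that the supremum over $\beta$ of the constrained suprema equals the unconstrained supremum, i.e. that the feasible range of $\beta$ is exactly $[1/\eta ^{\eta/(\eta -1)},1]$, which is precisely the feasibility clause of Proposition \ref{prop:fornt} together with Theorem \ref{prop:2}.
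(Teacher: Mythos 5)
Your argument is correct and follows essentially the same route as the paper: the corollary is obtained by maximizing the frontier of Proposition \ref{prop:fornt} over feasible $\beta$, using the monotonicity of $h(x)=x^{1/\eta}-x$ to locate the maximum at the left endpoint $\beta^{\ast}=1/\eta^{\eta/(\eta-1)}$, which is exactly the substitution recorded at the end of the paper's proof of that proposition. The only cosmetic caveat is that at $\beta^{\ast}$ the attaining Pareto $\alpha=\eta/(\eta-1)$ is the borderline case with infinite surplus, so "attained" should be read in the same limiting (truncated Pareto, $k\rightarrow\infty$) sense the paper already uses in Theorem \ref{prop:2}.
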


\begin{figure}[H]
	\centering
	\includegraphics[width=0.67\textwidth]{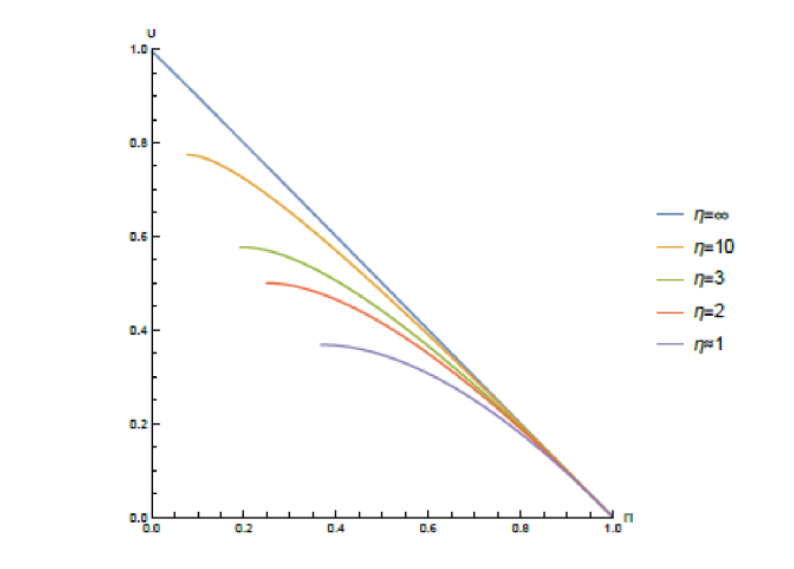}
	\caption{Upper Frontier for Different Iso-Elastic Cost Functions}
	\label{fig11}
\end{figure}

Hence, we obtain that in general the buyer can capture a greater share of
the social surplus when the cost is more elastic.

Surprisingly, the mechanism $M^{\ast }$ that guarantees the seller the
highest profit guarantee across all distributional environments is also the
mechanism that offers the buyer the highest expected consumer surplus across
all optimal mechanism for all distributional environments. Thus, the
mechanism that guarantees the seller the highest revenue does so by
conceding the most consumer surplus and offering a nearly efficient
mechanism. It provides an equilibrium allocation to every agent that is a
constant fraction of the socially efficient allocation.

\subsection{Lower Bound on Social Surplus}

We now find a lower bound on the total surplus generated by a
Bayesian-optimal mechanism across all distribution of values. That is, we
find the minimum social surplus: 
\begin{equation}
\inf_{F}\ \frac{U_{F}+\Pi _{F}}{S_{F}}.  \tag{D}  \label{eq:minsurplus}
\end{equation}%
Note that $S_{F}$ is the social surplus generated by the efficient
allocation so, in general, $U_{F}+\Pi _{F}<S_{F}$. We will be able to find
this lower bound only when $\eta \geq 2$, that is, when the marginal cost is
convex.

We now provide a lower bound on the distortions generated by any mechanism.
Before we provide the result, we note that: 
\begin{equation*}
\frac{U_{P_{\alpha }}}{S_{P_{\alpha }}}\bigg|_{\alpha =1}=0\quad \text{and}%
\quad \frac{\Pi _{P_{\alpha }}}{S_{P_{\alpha }}}\bigg|_{\alpha =1}=\frac{1}{%
\eta }.
\end{equation*}%
That is, when the distribution of values is the Pareto distribution with
shape parameter $\alpha =1$ the consumer's surplus is 0 (in fact, it is 0
for every truncated Pareto distribution $P_{1 ,k}$, not only in the limit),
and the normalized profit is $1/\eta .$ We thus have that: 
\begin{equation}
\frac{U_{P_{\alpha }}+\Pi _{P_{\alpha }}}{S_{P_{\alpha }}}\bigg|_{\alpha =1}=%
\frac{1}{\eta }.  \label{eq:rmin2}
\end{equation}%
In other words, the generated social surplus is a fraction $1/\eta $ of the
efficient social surplus.

\begin{proposition}[Lower Bound on Social Surplus]
\label{lwb}\quad \newline
When $\eta \geq 2$, social surplus is bounded below by: 
\begin{equation*}
\inf_{F}\ \frac{U_{F}+\Pi _{F}}{S_{F}}=\frac{U_{P_{\alpha }}+\Pi _{P_{\alpha
}}}{S_{P_{\alpha }}}\bigg|_{\alpha =1}=\frac{1}{\eta }.
\end{equation*}
\end{proposition}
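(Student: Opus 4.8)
The plan is to prove the equivalent statement that the optimal mechanism loses at most a fraction $(\eta-1)/\eta$ of the first best, i.e. $S_F-(U_F+\Pi_F)\le\frac{\eta-1}{\eta}\,S_F$ for every $F$, with equality approached along the truncated Pareto family $P_{1,k}$ as $k\to\infty$ (for which the excerpt already records $\big(U_{P_\alpha}+\Pi_{P_\alpha}\big)/S_{P_\alpha}\big|_{\alpha=1}=1/\eta$). Using the expressions for $U_F,\Pi_F,S_F$ in terms of the ironed virtual value $\bar\phi$ obtained in the proof of Proposition \ref{prop:fornt}, one has $U_F+\Pi_F=\int_0^\infty\big(vQ(v)-c(Q(v))\big)f(v)\,dv$ with $Q(v)=\bar\phi(v)^{1/(\eta-1)}$ on the served region and $Q\equiv0$ elsewhere, and $0\le\bar\phi(v)\le v$ there. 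Substituting $\bar\phi(v)=x(v)\,v$ with $x(v)\in[0,1]$ (and $x:=0$ off the served region), and letting $\mu$ be the probability measure with density proportional to $v^{\eta/(\eta-1)}f(v)$, this reduces to the identity
\[
\frac{S_F-(U_F+\Pi_F)}{S_F}=\frac{\eta}{\eta-1}\,\mathbb{E}_\mu[\ell(x)],\qquad \ell(t):=\frac{\eta-1}{\eta}-t^{\frac1{\eta-1}}+\frac1\eta\,t^{\frac\eta{\eta-1}},
\]
so it suffices to show $\mathbb{E}_\mu[\ell(x)]\le\big(\tfrac{\eta-1}{\eta}\big)^2$.

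Two ingredients do this. First, the hypothesis $\eta\ge2$ enters exactly here: $\ell''(t)=\frac{1}{(\eta-1)^2}\,t^{\frac{3-2\eta}{\eta-1}}\big(t+\eta-2\big)$, which is nonnegative on $[0,1]$ iff $\eta\ge2$; hence $\ell$ is convex on $[0,1]$ and lies below its chord, $\ell(t)\le\frac{\eta-1}{\eta}(1-t)$ for $t\in[0,1]$. Second, I claim $\mathbb{E}_\mu[x]\ge1/\eta$, i.e. $\int_0^\infty\bar\phi^+(v)\,v^{\frac1{\eta-1}}f(v)\,dv\ge\frac1\eta\int_0^\infty v^{\frac\eta{\eta-1}}f(v)\,dv$. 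Let $v_0$ be the exclusion threshold, so $\bar\phi\le0$ on $[\underline v,v_0]$ and the served region is $[v_0,\bar v]$ (such $v_0<\bar v$ exists since the seller can always make strictly positive profit, cf. Theorem \ref{th:1}). On $[v_0,\bar v]$, using $\bar\phi(v)f(v)=vf(v)-(1-F(v))$ and integrating $\int v^{\frac\eta{\eta-1}}f$ by parts (the boundary term at $\infty$ vanishing by the standing condition $(1-F(v))v^{\eta/(\eta-1)}\to0$), the claim reduces, after cancellation of the common tail integral, to
\[
(\eta-1)\,v_0^{\frac\eta{\eta-1}}\big(1-F(v_0)\big)\ \ge\ \int_{\underline v}^{v_0}v^{\frac\eta{\eta-1}}f(v)\,dv .
\]
This holds because $\bar\phi\le0$ on $[\underline v,v_0]$ makes $v(1-F(v))$ nondecreasing there, hence $1-F(v)\le v_0(1-F(v_0))/v$ on $[\underline v,v_0]$; inserting this bound into the integral (after an integration by parts) leaves a remainder $\underline v^{\frac\eta{\eta-1}}-\eta\,v_0(1-F(v_0))\,\underline v^{\frac1{\eta-1}}$ that is nonpositive because $v_0(1-F(v_0))\ge\underline v\,(1-F(\underline v))=\underline v$.

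Combining the ingredients, $\dfrac{S_F-(U_F+\Pi_F)}{S_F}=\dfrac{\eta}{\eta-1}\,\mathbb{E}_\mu[\ell(x)]\le\mathbb{E}_\mu[1-x]=1-\mathbb{E}_\mu[x]\le1-\dfrac1\eta$, which yields $(U_F+\Pi_F)/S_F\ge1/\eta$ for all $F$. Equality forces $\Pi_F/S_F=\mathbb{E}_\mu[x^{\eta/(\eta-1)}]=1/\eta$ together with $x\in\{0,1\}$ $\mu$-a.s.\ — exactly the configuration realized in the limit by $P_{1,k}$ — so the bound is sharp and $\inf_F(U_F+\Pi_F)/S_F=1/\eta$.

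I expect the main obstacle to be not this chain of inequalities but making the $\bar\phi$-representation and the two integrations by parts rigorous for an arbitrary $F$: one must deal with ironed intervals lying inside or straddling the exclusion region (where $v(1-F(v))$ need not be monotone for the naive reason, so the monotonicity step in the proof of $\mathbb{E}_\mu[x]\ge1/\eta$ should be rerun with the concave-hull ``ironed revenue'' $\overline{R}(q)=\int_0^q\bar\phi(v(s))\,ds\ge R(q)$ in place of $R$), with atoms (the extremal $P_{1,k}$ has one), and with the boundary cases $\underline v=0$ and $\bar v=\infty$. A route that keeps the bookkeeping light is to establish the inequality first for regular $F$ (where $\bar\phi=\phi$ and everything above is literal) and then show that $\inf_F$ over all distributions equals $\inf_F$ over regular distributions by approximation. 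It is also worth flagging the structural fact used throughout, that $\bar\phi(v)\le v$ always — true because the slope of the chord defining $\bar\phi$ on an ironed interval is bounded by the value at the lower endpoint of that interval — since this is what confines $x$ to $[0,1]$ and hence licenses the chord bound on $\ell$.
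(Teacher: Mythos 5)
Your argument is correct, and it reaches the bound by a route that differs in execution from the paper's, though both hinge on the same two pillars. The paper works from the representation \eqref{us}: it uses the pointwise inequality $\bar{\phi}(v)^{\frac{1}{\eta-1}}v\geq \bar{\phi}(v)v^{\frac{1}{\eta-1}}$ (valid for $\eta\geq 2$ because $\bar{\phi}\leq v$), then the ironing inequality $\int \bar{\phi}\,h\,f\,dv\geq \int \phi\,h\,f\,dv$ for increasing $h$ plus one integration by parts, to obtain the linear bound \eqref{eq:lb}, i.e.\ $(\eta-1)U_{F}+\eta\Pi_{F}\geq S_{F}$, and finally adds $U_{F}\geq 0$ and invokes \eqref{eq:rmin2} for tightness. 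You instead bound the realized surplus $U_{F}+\Pi_{F}=\int (vQ-c(Q))f\,dv$ directly: writing $x=\bar{\phi}/v\in[0,1]$, your chord bound $\ell(t)\leq \frac{\eta-1}{\eta}(1-t)$ from convexity of $\ell$ is where $\eta\geq 2$ enters (and is in fact implied by the paper's simpler pointwise step, since $x^{1/(\eta-1)}\geq x$ gives $x^{1/(\eta-1)}(1-x/\eta)\geq x(1-1/\eta)$), while your second ingredient $\mathbb{E}_{\mu}[x]\geq 1/\eta$ is exactly the paper's ironing-plus-integration-by-parts fact $\int \bar{\phi}^{+}v^{\frac{1}{\eta-1}}f\,dv\geq \frac{1}{\eta}\int v^{\frac{\eta}{\eta-1}}f\,dv$, which you re-derive by hand through the exclusion threshold and monotonicity of $v(1-F(v))$ on the excluded region. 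Your hand derivation is fine for regular $F$, but the regularity/approximation worry you flag at the end dissolves if you simply use the paper's own tools at that point: $\bar{\phi}^{+}\geq\bar{\phi}$, the cited ironing inequality for the increasing weight $v^{\frac{1}{\eta-1}}$, and one integration by parts handle arbitrary (irregular, atomic) $F$ with no approximation step, so your proof closes completely. The trade-off: the paper's route yields the sharper intermediate inequality $(\eta-1)U_{F}+\eta\Pi_{F}\geq S_{F}$ (which it reuses for the lower boundary in Proposition \ref{th:2}), whereas your route gives the surplus bound in one shot and makes transparent exactly how the distortion loss $\ell(x)$ depends on the ratio $\bar{\phi}/v$; tightness in both cases is the computation at the Pareto limit $\alpha=1$.
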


\begin{proof}
Following expression \eqref{us} we obtain: 
\begin{equation*}
\frac{\eta -1}{2\eta -1}U_F+\frac{\eta }{2\eta -1}\Pi_F =\frac{\eta -1}{%
2\eta -1}\int_{0}^{\infty }\bar{\phi}(v)^{\frac{1}{\eta -1}}vf(v)dv.
\end{equation*}%
We first note that, $\bar{\phi}(v)\leq v$, so when $\eta \geq 2$, we have
that: 
\begin{equation*}
\bar{\phi}(v)^{\frac{1}{\eta -1}}v\geq \bar{\phi}(v)v^{\frac{1}{\eta -1}}.
\end{equation*}%
We now note that: 
\begin{equation}  \label{ded}
\int_{0}^{\infty }\bar{\phi}(v)v^{\frac{1}{\eta -1}}f(v)dv\geq
\int_{0}^{\infty }\phi (v)v^{\frac{1}{\eta -1}}f(v)dv=\frac{1}{\eta }\int v^{%
\frac{\eta -1}{\eta }}f(v)dv.
\end{equation}%
The inequality follows from the fact that, by construction of the ironed
virtual values, for any increasing function $h(v)$ we have that $%
\int_{0}^{\infty }\bar{\phi}(v)h(v)f(v)dv\geq \int_{0}^{\infty }\phi
(v)h(v)f(v)dv$ (see \cite{klms21} ). The equality follows from integrating
by parts. Then the right-hand-side of \eqref{ded} is exactly equal to $S_{F}$%
, so we have that: 
\begin{equation}
\frac{\eta -1}{2\eta -1}U_{F}+\frac{\eta }{2\eta -1}\Pi _{F}\geq \frac{1}{%
2\eta -1}S_{F}.  \label{eq:lb}
\end{equation}%
We can now show the inequality is tight. Using \eqref{eq:rmin2} we get that: 
\begin{equation*}
\frac{\eta -1}{2\eta -1}U_{F}+\frac{\eta }{2\eta -1}\Pi _{F}=\frac{1}{2\eta
-1}S_{F}.
\end{equation*}%
This corresponds to the lower bound \eqref{eq:lb}. We thus have that: 
\begin{equation*}
\inf_{F}\ \frac{\frac{\eta -1}{2\eta -1}U_{F}+\frac{\eta }{2\eta -1}\Pi _{F}%
}{S_{F}}=\frac{\eta }{2\eta -1}\frac{1}{\eta }.
\end{equation*}%
Since $\frac{\eta }{2\eta -1}>1/2$ and the infimum is attained at an
information structure that generates 0 consumer surplus, we must also have
that: 
\begin{equation*}
\inf_{F}\ \frac{\frac{1}{2}U_{F}+\frac{1}{2}\Pi _{F}}{S_{F}}=\frac{1}{2}%
\frac{1}{\eta }.
\end{equation*}%
Multiplying by 2, we obtain the result.
\end{proof}

As the cost becomes more inelastic ($\eta $ grows), the lower bound becomes
smaller. When the cost is quadratic, then the optimal mechanism always
generates at least 1/2 of the social surplus. Note that the result obviously
does not apply for every $\eta <2$. In particular, in the limit $\eta
\rightarrow 1$ we know that the optimal mechanism might introduce
non-negligible distortions (see Proposition \ref{prop:fornt}).%

\subsection{Complete Surplus Boundary}

Finally, we might be interested in a complete characterization of the set of
feasible surplus pairs. For the case of the quadratic cost function we can
provide such a description. The reason that $\eta =2$ is particularly easy
to analyze is that in this case we can compute all qualities in closed-form.

Beyond the quadratic case, we can also provide a general lower bound for the
equilibrium surplus realized relative to the social surplus. This general
bound only requires that the marginal cost is convex, thus $\eta \geq 2$.

The feasible normalized profit and consumer surplus are: 
\begin{equation*}
\mathcal{F}=\{(x,y)\in \mathbb{R}\mid \text{there exists $F$ such that }x=%
\frac{U_{F}}{S_{F}}\text{ and }y=\frac{\Pi _{F}}{S_{F}}\}.
\end{equation*}%
%
%
%
%
%
%
%
%
%
%
%
%
%
%
%
%
%
%
We consider the problem of characterizing $\mathcal{F}$. We will only be
able to do this when $\eta =2$, but previous results suggest that an
analogous characterization applies to all $\eta .$

When $\eta =2,$ for a fixed $k$, we have that: 
\begin{equation*}
\frac{U_{^{P_{\alpha ,k}}}}{S_{P_{\alpha ,k}}}=\frac{2(\alpha -1)(k^{\alpha
-2}-1)}{\alpha (\alpha k^{\alpha -2}-2)}\quad \text{and}\quad \frac{\Pi
_{P_{\alpha ,k}}}{S_{P_{\alpha ,k}}}=\frac{(\alpha -1)^{2}k^{\alpha -2}-1}{%
\alpha (\alpha k^{\alpha -2}-2)}.
\end{equation*}%
Taking the limit $k\rightarrow \infty $, we get that: 
\begin{equation}
(\frac{U_{P_{\alpha }}}{S_{P_{\alpha }}},\frac{\Pi _{P_{\alpha }}}{%
S_{P_{\alpha }}})=%
\begin{cases}
\left( \frac{2(\alpha -1)}{\alpha ^{2}},\frac{(\alpha -1)^{2}}{\alpha ^{2}}%
\right) , & \text{for }\alpha \in \lbrack 2,\infty ); \\ 
(\frac{\alpha -1}{\alpha },\frac{1}{2\alpha }), & \text{for }\alpha \in
\lbrack 1,2].%
\end{cases}
\label{eq:bound}
\end{equation}%
The curve for $\alpha \geq 2$ is the one characterized in Proposition \ref%
{prop:fornt}; the curve for $\alpha \in \lbrack 1,2]$ does not have a direct
counter part in the results we have provided thus far (except for $\alpha =1$%
).

\begin{proposition}[Feasible Normalized Profits and Utilities]
\label{th:2}\quad \newline
The closure of $\mathcal{F}$ is given by the area enclosed by the curves in %
\eqref{eq:bound}. Every point in the interior of $\mathcal{F}$ is generated
by some truncated Pareto distribution $P_{\alpha ,k}$.
\end{proposition}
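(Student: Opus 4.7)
The plan is to sandwich the feasible region between the two curves in \eqref{eq:bound} and then to exhibit truncated Pareto distributions realizing every interior point. I would establish two inequalities that hold for every distribution $F$ when $\eta=2$: an upper bound on $U_F/S_F$ inherited from Proposition \ref{prop:fornt}, and a lower bound of the form $U_F+2\Pi_F\geq S_F$ inherited from the proof of Proposition \ref{lwb}. The converse inclusion is then a continuity/surjectivity argument for the two-parameter family $P_{\alpha,k}$.

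For the upper boundary, Proposition \ref{prop:fornt} specialized to $\eta=2$ gives $U_F/S_F\leq 2((\Pi_F/S_F)^{1/2}-\Pi_F/S_F)$, which, via the substitution $\beta=(\alpha-1)^{2}/\alpha^{2}$, is exactly the upper branch of \eqref{eq:bound} parameterized by $\alpha\in[2,\infty)$. For the lower boundary, I note that the locus $\{(\tfrac{\alpha-1}{\alpha},\tfrac{1}{2\alpha}):\alpha\in[1,2]\}$ is precisely the line segment $\{(x,y):x+2y=1,\ x\in[0,1/2]\}$. The proof of Proposition \ref{lwb} for $\eta=2$ (before the final symmetrization step) already yields $(\eta-1)U_F+\eta\Pi_F\geq S_F$, i.e.\ $U_F+2\Pi_F\geq S_F$, so $x+2y\geq 1$ for every feasible $(x,y)$. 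Together these two inequalities place $\mathcal{F}$ inside the closure of the enclosed region.

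For the converse, I would use the closed-form expressions for $(U_{P_{\alpha,k}}/S_{P_{\alpha,k}},\Pi_{P_{\alpha,k}}/S_{P_{\alpha,k}})$ stated just before the proposition, which show that the map $\Psi:(\alpha,k)\mapsto(U/S,\Pi/S)$ is continuous on $[1,\infty)\times[1,\infty]$. I would then check that the boundary of the parameter rectangle traces the boundary of the enclosed region: $k\to\infty$ with $\alpha\geq 2$ recovers the upper curve; $k\to\infty$ with $\alpha\in[1,2]$ recovers the lower segment; $k=1$ collapses $P_{\alpha,k}$ to a point mass at $1$ and delivers the corner $(0,1)$; and $\alpha=1$ with $k$ varying traces the left edge from $(0,1/2)$ to $(0,1)$, since one computes $U/S=0$ and $\Pi/S=1/(2-1/k)$ in that slice.

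The main obstacle is upgrading this boundary-to-boundary correspondence to surjectivity onto the interior. I would handle this via invariance of domain: $\Psi$ extends to a continuous map between topological disks that agrees on the boundary with the parameterization of $\partial\mathcal{F}$, so, provided its Jacobian is nonzero in the interior of the parameter rectangle (which can be verified directly from the closed-form formulas), $\Psi$ is an open map and hence its image contains the whole interior of the enclosed region. A more hands-on alternative, which I would use as a fall-back, fixes the target profit share $\beta\in(\tfrac{1}{4},1)$ and varies $(\alpha,k)$ along a one-parameter curve on which $\Pi/S\equiv\beta$; continuity and the endpoint values (attaining the upper frontier as $k\to\infty$ and the lower line $x+2y=1$ at the other endpoint) then force $U/S$ to sweep every admissible value in between by the intermediate value theorem, completing the characterization.
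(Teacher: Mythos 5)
Your argument matches the paper's own proof: the upper boundary is taken from Proposition \ref{prop:fornt}, and the lower boundary from the $\eta=2$ specialization of \eqref{eq:lb}, namely $U_F+2\Pi_F\geq S_F$, with Pareto distributions $\alpha\in[1,2]$ attaining it. The paper stops there and leaves interior attainment by $P_{\alpha,k}$ implicit in the closed-form formulas, so your continuity/IVT (or degree-theoretic) surjectivity step is a useful elaboration rather than a divergence — just note that for horizontal slices with $\Pi_F/S_F>1/2$ the second endpoint lies on the zero-consumer-surplus edge traced by $\alpha=1$ and finite $k$, not on the line $x+2y=1$.
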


\begin{proof}
The upper boundary was characterized in Proposition \ref{prop:fornt}. We now
characterize the lower boundary of $\mathcal{F}$. Writing \eqref{eq:lb} for $%
\eta=2$, we get: 
\begin{equation*}
\frac{1}{3}U_{F}+\frac{2}{3}\Pi _{F}\geq \frac{1}{3}S_{F}.
\end{equation*}%
However, the limit of Pareto distributions with parameter $\alpha \in
\lbrack 1,2]$ (see \eqref{eq:rmin}) gives exactly that: 
\begin{equation*}
\frac{\frac{1}{3}U_{P_{\alpha }}+\frac{2}{3}\Pi _{P_{\alpha }}}{S_{P_{\alpha
}}}=\frac{1}{3}.
\end{equation*}%
Hence, these distributions give the lower frontier of the set of feasible
consumer surplus and profit.
\end{proof}

The proposition gives a full characterization of the set of normalized
profit and consumer surplus generated by any distribution of values. We
provide an illustration of the result below in Figure \ref{figp}. We
established that the upper boundary of the surplus sharing between seller
and buyers is attained by the family of Pareto distributions with $\alpha
\geq 2$. The lower bound along the segment that provides positive consumer
surplus is attained by Pareto distribution with $1\leq \alpha \leq 2$.
Finally, the lower segment with zero surplus is attained by truncated Pareto
distribution. Here, the seller offers the product only to the buyers with a
value in the mass point of the truncated distribution. As a consequence, the
seller can extract all the surplus and provides the efficient allocation for
those buyers at the upper mass point. The buyers who get served receive zero
net surplus.

Thus, the family of truncated Pareto distribution generates all feasible
surplus pairs. Yet we may be interested in how other distribution of value
may affect the generation and distribution of surplus between the seller and
the buyers. In Figure \ref{figbp} we illustrate the range of outcomes that
is generated by other families of distribution including binary, uniform and
power distributions.

\begin{figure}[H]
	\centering
	\includegraphics[width=0.85\textwidth]{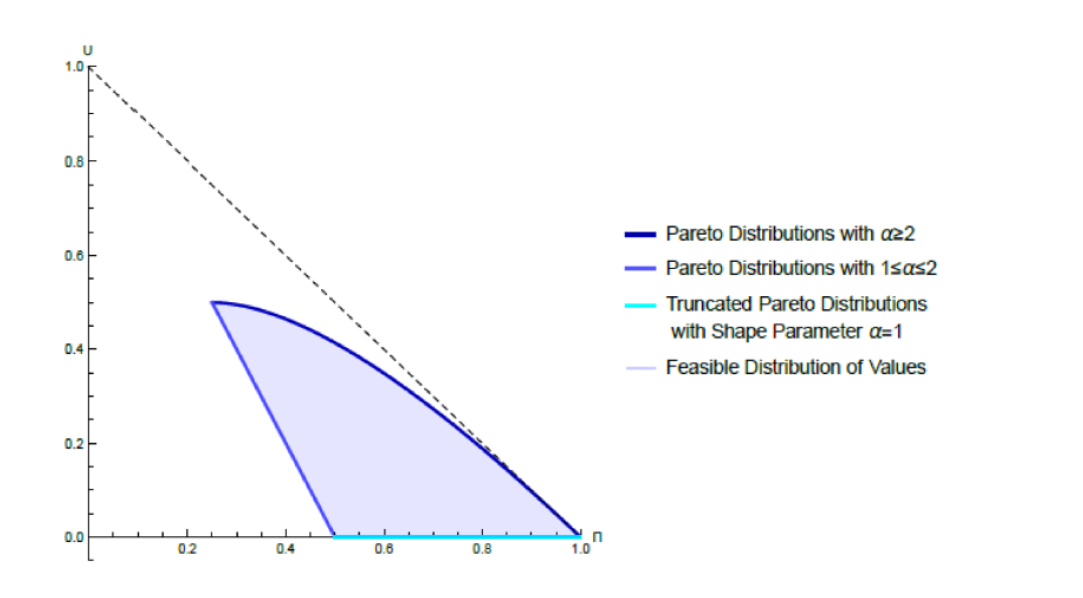}
	\caption{Feasible Normalized Consumer Surplus and Normalized Profits for Quadratic Cost}
	\label{figp}
\end{figure}


\begin{figure}[H]
	\centering
	\includegraphics[width=0.85\textwidth]{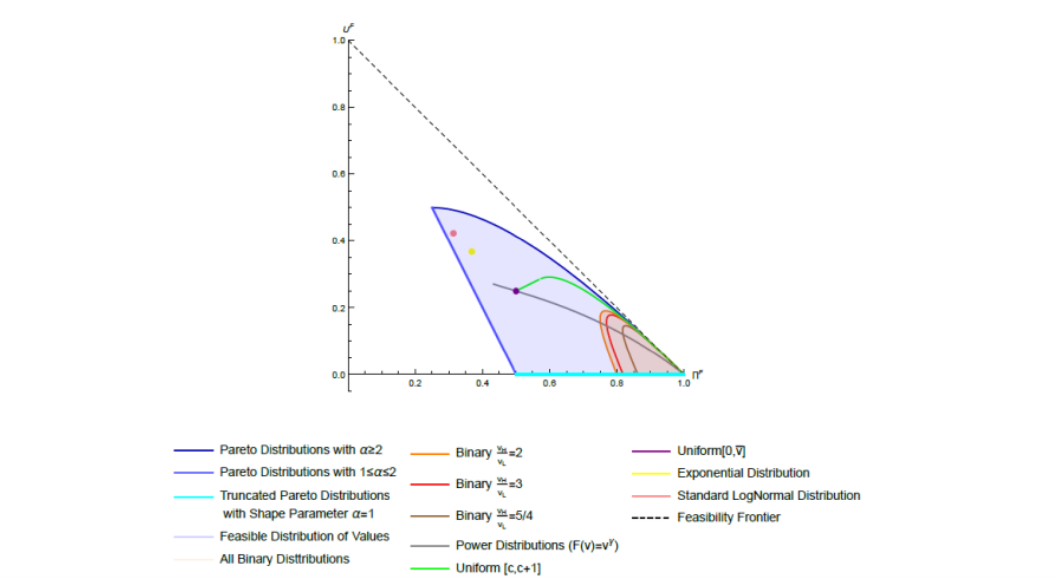}
	\caption{Illustration of the Normalized Consumer Surplus and Normalized Profits for Different Distributions with Quadratic Costs}
	\label{figbp}
\end{figure}


\section{Quantity Discrimination\label{sec:qua}}

So far, we considered a model of quality discrimination in the spirit of 
\cite{muro78}. We now investigate corresponding results for quantity
discrimination in the spirit of \cite{mari84}. We first consider the
multiplicatively separable model and then develop more general results. In
the model of quality discrimination of \cite{muro78}, the buyer has a
constant marginal willingness to pay for quality, and the cost of quality
provision is increasing and convex. By contrast, in the model of quantity
discrimination, there is a constant marginal cost of providing an additional
unit of a given product. The diminishing returns now arise from the
concavity of the utility function. Correspondingly, in the former the payoff
environment is described by the cost elasticity whereas in the later the
demand elasticity determines the choices of buyer and seller.

We first provide a profit guarantee for the case of multiplicatively
separable utility functions and then extend it to nonlinear environments
without separability conditions.

\subsection{Multiplicatively Separable Utility}

We now assume that the utility function is given by 
\begin{equation*}
u(v,q)=v\frac{\eta }{\eta +1}q^{\frac{\eta +1}{\eta }},
\end{equation*}%
for some $\eta \in (-\infty ,-1)$. Thus the utility function for a higher
quantity is increasing and concave. The cost of production is linear $%
c(q)=cq $, where we normalize $c=1.$

The demand is defined by the inverse of the marginal utility:%
\begin{equation*}
D(v,p)\triangleq u_{q}^{-1}(v,p),
\end{equation*}%
where the subscript $q$\ denotes the partial derivative with respect to $q$.
With the above parametrization we find that the demand elasticity is 
\begin{equation*}
\frac{\partial D(v,p)}{\partial p}\frac{p}{D(v,p)}\triangleq \eta .
\end{equation*}%
We note that as we shift from cost elasticity to demand elasticity, we
maintain $\eta $ as the parameter of the elasticity. However, now $\eta $ is
a negative number $\eta \in (-\infty ,-1)$.

As earlier in the case of quality discrimination, the Pareto distribution
with the shape parameter $\alpha \in (1,\infty )$ is playing a critical role
for the minmax problem.

\begin{theorem}[Profit Guarantee with Quantity Discrimination]
\label{thm:opti}\quad \newline
The profit guarantee mechanism is a uniform-price mechanism $t=p^{\ast }q$
with 
\begin{equation}
p^{\ast }=\frac{{\eta }}{{\eta +1}}.  \label{up}
\end{equation}%
It generates profits: 
\begin{equation}
\Pi ^{\ast }= \left( \frac{{\eta }}{{\eta }+1}\right) ^{{\eta }}S,
\label{up1}
\end{equation}%
for every $F$. Furthermore, The profit-guarantee mechanism is the Bayes
optimal mechanism against the Pareto distribution with shape parameter $%
\alpha=|\eta|$ and attains the infimum: 
\begin{equation}
\lim_{\alpha \rightarrow \left\vert \eta \right\vert }\frac{\Pi _{P_{\alpha
}}}{S_{P_{\alpha }}}=\left( \frac{\eta }{\eta +1}\right) ^{\eta }.
\label{eq:dd2}
\end{equation}%
%
%
%
%
%
%
%
%
%
%
%
%
%
%
%
%
\end{theorem}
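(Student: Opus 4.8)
The plan is to transplant the argument of Theorems \ref{th:1} and \ref{prop:2} to the quantity-discrimination setting, exploiting the fact that the demand elasticity plays the role of the cost elasticity. First I would compute the profit generated by an arbitrary uniform-price mechanism $t = pq$ under an arbitrary distribution $F$. Given a posted unit price $p$, a buyer of type $v$ buys the quantity $q = D(v,p) = u_q^{-1}(v,p)$ that equates marginal utility with price; with the parametrization $u(v,q) = v\frac{\eta}{\eta+1}q^{\frac{\eta+1}{\eta}}$ one gets $u_q(v,q) = v q^{1/\eta}$, hence $q(v,p) = (p/v)^{\eta}$ and the buyer's net utility and the seller's revenue per type are both explicit powers of $v$. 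Because the marginal cost is constant and normalized to $1$, the profit from type $v$ is $(p-1)q(v,p) = (p-1)p^{\eta}v^{-\eta} = (p-1)p^{\eta}v^{|\eta|}$ (recall $\eta<-1$ so $-\eta=|\eta|>1$), and the integrability assumption guarantees $\mathbb{E}[v^{|\eta|}]<\infty$, which is exactly the condition ensuring $S<\infty$.

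Next I would show that $S_F$ itself is a fixed multiple of $\mathbb{E}[v^{|\eta|}]$. The efficient quantity for type $v$ solves $u_q(v,q)=1$, i.e. $q^\ast(v) = v^{-\eta} = v^{|\eta|}$, giving efficient surplus per type $u(v,q^\ast(v)) - q^\ast(v) = \left(\frac{\eta}{\eta+1}-1\right)v^{|\eta|} = \frac{-1}{\eta+1}v^{|\eta|}$. Taking expectations, $S_F = \frac{-1}{\eta+1}\mathbb{E}[v^{|\eta|}]$. Dividing the profit expression by $S_F$ makes the distribution drop out entirely, leaving $\Pi^p_{F}/S_F = -(\eta+1)(p-1)p^{\eta}$, a function of the price $p$ alone. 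Maximizing this ratio over $p$: the first-order condition $\frac{d}{dp}\left[(p-1)p^{\eta}\right]=0$ gives $p^{\eta} + \eta(p-1)p^{\eta-1}=0$, i.e. $p + \eta(p-1)=0$, i.e. $p^\ast = \frac{\eta}{\eta+1}$ (which lies in $(1,\infty)$ since $\eta<-1$). Substituting back yields $(p^\ast - 1) = \frac{-1}{\eta+1}$ and $(p^\ast)^{\eta} = \left(\frac{\eta}{\eta+1}\right)^{\eta}$, so $\Pi^\ast_F = \left(\frac{\eta}{\eta+1}\right)^{\eta}S_F$ for every $F$, which is \eqref{up1}. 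Since this uniform price attains a fixed fraction of social surplus across all $F$, the infimum of the normalized profit over $F$ is at least this fraction; it only remains to exhibit a distribution achieving it.

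For the matching distribution, I would verify that the Pareto distribution $P_\alpha$ with $\alpha = |\eta|$ is the minmax distribution: under it the Myerson virtual value is $\phi(v) = v - \frac{1-F(v)}{f(v)} = v\left(1 - \frac{1}{\alpha}\right) = \frac{\alpha-1}{\alpha}v$, which is increasing, so the Bayes-optimal quantity-discrimination mechanism replaces $v$ by $\phi(v)$ in the efficient rule, $q(v) = \phi(v)^{|\eta|} = \left(\frac{\alpha-1}{\alpha}\right)^{|\eta|}v^{|\eta|}$. The induced per-unit price faced by type $v$ is constant across $v$ and equals $\phi(v)^{-\eta\cdot(1/\eta)}\cdot$(constant); more cleanly, since the Bayes-optimal allocation is obtained from the efficient one by the substitution $v\mapsto\frac{\alpha-1}{\alpha}v$, and the efficient allocation corresponds to marginal price $1$, the Bayes-optimal indirect mechanism is exactly the uniform price $p^\ast = \frac{\alpha}{\alpha-1} = \frac{|\eta|}{|\eta|-1} = \frac{\eta}{\eta+1}$ (using $\alpha=|\eta|=-\eta$). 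Thus the profit-guarantee mechanism coincides with the Bayes-optimal mechanism under $P_{|\eta|}$, and by the first part its normalized profit is $\left(\frac{\eta}{\eta+1}\right)^{\eta}$, establishing \eqref{eq:dd2} as a limit of truncated Paretos $P_{\alpha,k}$ as $\alpha\to|\eta|$ and $k\to\infty$ (the limit is needed because $\mathbb{E}[v^{|\eta|}]$ is infinite exactly at $\alpha=|\eta|$). The main obstacle is bookkeeping the signs: $\eta$ is negative here, so every exponent and every "elasticity" inequality from the quality model flips, and one must be careful that $p^\ast=\frac{\eta}{\eta+1}>1$, that $(p-1)p^{\eta}$ is indeed maximized rather than minimized at the critical point, and that the integrability condition transfers correctly. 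Once the substitution $\eta \leftrightarrow$ (negative demand elasticity) and $q^{\eta}/\eta \leftrightarrow$ (linear cost, concave utility) is set up consistently, the algebra is a direct translation of the proof of Theorem \ref{th:1}.
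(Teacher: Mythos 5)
Your proposal is correct, but it takes a different route from the paper. The paper's proof works by a change of variables ($\hat{q}=q^{\eta }/\eta $, $\hat{\eta}=-\eta /(\eta -1)$, $\hat{v}=v(\hat{\eta}/(\hat{\eta}+1))^{1/\hat{\eta}}$) that maps the quantity-discrimination model exactly onto the quality-discrimination model of Section \ref{sec:model}, so that the guarantee and minmax statements are imported wholesale from Theorems \ref{th:1} and \ref{prop:2}; it then verifies Bayes optimality against the Pareto family by directly computing $S_{P_{\alpha }}$ and $\Pi _{P_{\alpha }}$ and taking $\alpha \rightarrow \left\vert \eta \right\vert $. You instead stay entirely inside the quantity model: you compute the demand $q(v,p)=(p/v)^{\eta }$ and the efficient surplus $\frac{-1}{\eta +1}v^{\left\vert \eta \right\vert }$ per type, observe that any uniform price $p$ earns the distribution-free share $-(\eta +1)(p-1)p^{\eta }$ of $S_{F}$, optimize over $p$ to get $p^{\ast }=\eta /(\eta +1)$ and \eqref{up1}, and then close the saddle point by noting that under $P_{\alpha }$ the virtual value $\frac{\alpha -1}{\alpha }v$ is linear and increasing, so the Bayes-optimal mechanism is itself uniform pricing at $\alpha /(\alpha -1)$, whose normalized profit converges to $\left( \eta /(\eta +1)\right) ^{\eta }$ as $\alpha \rightarrow \left\vert \eta \right\vert $ (with the truncation/limit handling the infinite-surplus boundary case), giving \eqref{eq:dd2}. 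Your sign bookkeeping ($\eta <-1$, $p^{\ast }>1$, second-order check, integrability $\alpha >\left\vert \eta \right\vert $) is the only delicate point and you flag it correctly. What each approach buys: yours is self-contained and makes transparent why the guaranteed share is uniform across all $F$ and why linear pricing suffices; the paper's change of variables is shorter and shows the two discrimination problems are literally isomorphic, so the minmax optimality and the uniform consumer-surplus share transfer without any recomputation. The only cosmetic slip is speaking of the "Bayes-optimal mechanism under $P_{\left\vert \eta \right\vert }$," where surplus is infinite, but you immediately give the correct limit interpretation, which is also how the theorem itself is stated.
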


%
%

\begin{proof}
In the baseline model of Section \ref{sec:model}, we have a buyer with
utility function: 
\begin{equation*}
u(v,q,t)=vq-t,
\end{equation*}%
and a seller with a cost function: 
\begin{equation*}
c(q)=\frac{q^{\eta }}{\eta }.
\end{equation*}%
Now consider the following change of variables 
\begin{equation*}
\hat{q}\triangleq \frac{q^{\eta }}{\eta },\ \ \hat{\eta}\triangleq -\frac{%
\eta }{\eta -1},\ \ \hat{v}\triangleq v(\frac{\hat{\eta}}{\hat{\eta}+1})^{%
\frac{1}{\hat{\eta}}}.
\end{equation*}%
We then have that the utility functions and cost functions are given by: 
\begin{equation*}
u(v,q,t)=\hat{v}(\frac{\hat{\eta}}{\hat{\eta}+1})\hat{q}^{\frac{\hat{\eta}+1%
}{\hat{\eta}}}-t,
\end{equation*}%
and 
\begin{equation*}
c(q)=\hat{q}.
\end{equation*}%
With this change of variable we then obtain the model of this section. The
profit guarantee result of Theorem \ref{th:1} and \ref{prop:2} then follow
immediately. We then establish that the uniform price mechanism is Bayes
optimal against the Pareto distribution with parameter $\alpha $.

The social surplus is given by: 
\begin{align*}
S_{P_{\alpha }}& =\int_{1}^{\infty }\max_{q}\{v\frac{\eta }{\eta +1}q^{\frac{%
\eta +1}{\eta }}-q\}dF(v) \\
& =\int_{1}^{\infty }\frac{-1}{\eta +1}v^{-\eta }dF(v) \\
& =\int_{1}^{\infty }\frac{-1}{\eta +1}v^{-\eta }dF(v) \\
& =\frac{-1}{\eta +1}\frac{-1}{-\eta +\alpha }.
\end{align*}%
On the other hand, the profit are given by: 
\begin{align*}
\Pi _{P_{\alpha }}& =\int_{1}^{\infty }\max_{q}\{v\frac{\eta }{\eta +1}q^{%
\frac{\eta +1}{\eta }}-q-\frac{\eta }{\eta +1}q^{\frac{\eta +1}{\eta }}\frac{%
1-F(v)}{f(v)}\}dF(v) \\
& =\int_{1}^{\infty }\max_{q}\{v\frac{\eta }{\eta +1}\frac{\alpha -1}{\alpha 
}q^{\frac{\eta +1}{\eta }}-q\}dF(v) \\
& =\int_{1}^{\infty }\frac{-1}{\eta +1}\left( \frac{\alpha +1}{\alpha }%
\right) ^{-\eta }v^{-\eta }dF(v) \\
& =\frac{1}{\eta +1}\frac{1}{\alpha -\eta }\left( \frac{\alpha +1}{\alpha }%
\right) ^{-\eta }
\end{align*}%
We then have that: 
\begin{equation*}
\frac{\Pi_{P_{\alpha }} }{S_{P_{\alpha }}}=\left( \frac{\alpha +1}{\alpha }%
\right) ^{-\eta }.
\end{equation*}%
Taking the limit $\alpha \rightarrow \left\vert \eta \right\vert $, we
obtain the result.
\end{proof}

Thus, in the case of concave utility functions and linear cost functions we
can recover a profit guarantee mechanism. The mechanism maintains the
constant mark-up property that we saw earlier, but in the presence of linear
costs, we now have that a linear pricing mechanism, a uniform per unit
price, generates the profit guarantee. Thus, the profit guarantee can be
established with an even simpler mechanism. With the change in variable
suggested in the proof of Theorem \ref{thm:opti}, it also follows
immediately that the profit guarantee mechanism generates a uniform profit
and consumer surplus share for all distributions $F$. Thus, the profit
guarantee mechanism maintains a uniform sharing of surplus between buyers
and seller across all distributions.

\subsection{Nonlinear Utility}

We now consider a class of nonlinear utility functions in which
willingness-to-pay and quantity can interact in a nonlinear manner and
without the former multiplicative separability condition. Thus, we assume
that the utility net of the payment $t\in \mathbb{R}_{+}$ is: 
\begin{equation*}
u(v,q,t)=h(v,q)-t,
\end{equation*}%
where $h$ is concave in $q$ given $v$. The willingness-to-pay parameter $v$
remains distributed according to $F$ and the cost of production remains
linear $c(q)=cq$ and we normalize $c=1$ without loss of generality.

With the nonlinearity now appearing in the utility function, we have
implicitly assumed that the seller has perfect information about the exact
shape of the nonlinearity while assuming incomplete information about the
willingness-to-pay $v$\ of the buyer. Therefore we are now asking whether we
can find a profit guarantee for the seller even when the seller has
imperfect information about the nonlinearity and hence the elasticity of the
demand function.

The demand function is then defined by the inverse of the marginal utility: 
\begin{equation*}
D(v,p)\triangleq h_{q}^{-1}(v,p),
\end{equation*}%
where the subscript $q$\ denotes the partial derivative with respect to $q$.
The demand elasticity is then given by: 
\begin{equation*}
\eta (v,p)\triangleq \frac{\partial D(v,p)}{\partial p}\frac{p}{D(v,p)},
\end{equation*}%
where the demand elasticity is assumed to be negative $\eta (v,p)<0$ for all 
$v,p$. We assume that, for all $p\in \lbrack 1,\infty ]$, 
\begin{equation}
\text{$\eta (v,p)$ is non-increasing in }p\text{\ and $\eta (v,p)\in \lbrack 
\bar{\eta}-1,\bar{\eta}]$,}  \label{eta}
\end{equation}%
for some $\bar{\eta}\in (-\infty ,-1)$. \ We next present a robust profit
guarantee that holds as long as the demand elasticity $\eta (v,p)$\ is
within the range $[\bar{\eta}-1,\bar{\eta}]$ for some upper bound $\bar{\eta}%
<-1$.

For a given demand function $D\left( v,p\right) $ , the optimal uniform
price is given by: 
\begin{equation*}
\hat{p}=\arg \max_{p}D(v,p)(p-c).
\end{equation*}%
The first-order condition can be written as follows: 
\begin{equation*}
\hat{p}=c\frac{\eta (v,\hat{p})}{\eta (v,\hat{p})+1}.
\end{equation*}%
We then have that: 
\begin{equation*}
\hat{p}\leq c\frac{\bar{\eta}}{\bar{\eta}+1}.
\end{equation*}%
Since the upper bound will be relevant for our analysis, it is useful to
denote by $\Pi ^{\ast }$ the profit generated by the uniform price mechanism
with price $p^{\ast }$: 
\begin{equation*}
\Pi ^{\ast }\triangleq D(v,p^{\ast })(p^{\ast }-c).
\end{equation*}

\begin{theorem}[Robust Profit-Guarantee Mechanism]
\label{thm:opt}\quad \newline
The uniform-price mechanism $t=p^{\ast }q$, where 
\begin{equation*}
p^{\ast }=\frac{{\bar{\eta}}}{{\bar{\eta}+1}}
\end{equation*}
guarantees a profit share of the social surplus: 
\begin{equation}
\Pi ^{\ast }\geq \left( \frac{\bar{\eta}}{\bar{\eta}+1}\right) ^{\bar{\eta}%
}S.  \label{optq}
\end{equation}
\end{theorem}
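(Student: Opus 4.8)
The plan is to prove the inequality \emph{pointwise in $v$} and then take expectations. Write $s(v)\triangleq\max_{q}\{h(v,q)-q\}$ for the efficient surplus of type $v$ (with the normalization $h(v,0)=0$), and note that the buyer of type $v$ facing the per-unit price $p^{\ast}=\bar\eta/(\bar\eta+1)$ chooses $q=h_{q}^{-1}(v,p^{\ast})=D(v,p^{\ast})$, so the profit extracted from this type is $\Pi^{\ast}(v)=(p^{\ast}-1)D(v,p^{\ast})$. It then suffices to prove
\[
\Pi^{\ast}(v)\ \ge\ \left(\tfrac{\bar\eta}{\bar\eta+1}\right)^{\bar\eta}s(v)\quad\text{for every }v,
\]
since integrating against $F$ gives $\Pi^{\ast}=\mathbb{E}[\Pi^{\ast}(v)]\ge(\tfrac{\bar\eta}{\bar\eta+1})^{\bar\eta}\mathbb{E}[s(v)]=(\tfrac{\bar\eta}{\bar\eta+1})^{\bar\eta}S$. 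Fix $v$ and abbreviate $\hat q\triangleq D(v,p^{\ast})$ and $\eta^{\ast}\triangleq\eta(v,p^{\ast})\in[\bar\eta-1,\bar\eta]$. I would first record the ``area behind the demand curve'' identity: with $q^{\ast}(v)=D(v,1)$ the efficient quantity, the change of variables $s=D(v,p)$ together with one integration by parts gives $s(v)=\int_{0}^{q^{\ast}(v)}(h_{q}(v,s)-1)\,ds=\int_{1}^{\infty}D(v,p)\,dp$, the upper limit being $\infty$ because an elasticity bounded above by $\bar\eta<-1$ forbids a finite choke price and forces $p\,D(v,p)\to0$.

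The crux is the monotonicity hypothesis on $\eta$. Putting $L(x)=\ln D(v,e^{x})$ one has $L'(x)=\eta(v,e^{x})$, which is non-increasing in $x$ since $\eta(v,p)$ is non-increasing in $p$; hence $L$ is concave and lies weakly below its supporting line at $x=\ln p^{\ast}$, i.e.\ $D(v,p)\le\hat q\,(p/p^{\ast})^{\eta^{\ast}}$ for all $p\ge1$. Plugging this into the surplus identity and using $\eta^{\ast}<-1$,
\[
s(v)\ \le\ \hat q\,(p^{\ast})^{-\eta^{\ast}}\!\int_{1}^{\infty}p^{\eta^{\ast}}\,dp\ =\ \frac{\hat q\,(p^{\ast})^{-\eta^{\ast}}}{|\eta^{\ast}+1|},
\]
so, since $\Pi^{\ast}(v)=(p^{\ast}-1)\hat q$,
\[
\frac{\Pi^{\ast}(v)}{s(v)}\ \ge\ (p^{\ast}-1)\,|\eta^{\ast}+1|\,(p^{\ast})^{\eta^{\ast}}\ =:\ g(\eta^{\ast}).
\]

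It remains to show $g(\eta^{\ast})\ge(p^{\ast})^{\bar\eta}$ for every $\eta^{\ast}\in[\bar\eta-1,\bar\eta]$, a one-variable fact using $p^{\ast}=\bar\eta/(\bar\eta+1)$ (equivalently $p^{\ast}-1=-1/(\bar\eta+1)$). Taking logarithms, $\ln g(\eta^{\ast})=\ln(p^{\ast}-1)+\ln|\eta^{\ast}+1|+\eta^{\ast}\ln p^{\ast}$ has second derivative $-(\eta^{\ast}+1)^{-2}<0$, so $\ln g$ is concave on $[\bar\eta-1,\bar\eta]$ and therefore lies above the chord joining its endpoint values; a direct substitution shows those endpoint values coincide, $g(\bar\eta)=g(\bar\eta-1)=(p^{\ast})^{\bar\eta}$, so the chord is the constant $(p^{\ast})^{\bar\eta}$ and $g\ge(p^{\ast})^{\bar\eta}$ throughout, which completes the pointwise bound and hence the theorem.

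The step demanding the most care is the monotonicity reduction: the point is that ``$\eta(v,\cdot)$ non-increasing'' is exactly log-concavity of $D(v,\cdot)$ in $\ln p$, which is what produces the single clean majorant $D(v,p)\le\hat q\,(p/p^{\ast})^{\eta^{\ast}}$ valid on the \emph{entire} range $p\ge1$ and thereby collapses what would otherwise be an optimization over demand shapes into the scalar inequality $g(\eta^{\ast})\ge(p^{\ast})^{\bar\eta}$. (An equivalent but heavier route notes that $\hat q^{-1}\int_{1}^{\infty}D(v,p)\,dp$ is a convex functional of the elasticity profile, hence maximized over non-increasing profiles valued in $[\bar\eta-1,\bar\eta]$ at a step profile equal to $\bar\eta$ below some threshold and $\bar\eta-1$ above it, whose value one computes to be $(p^{\ast})^{-\bar\eta}(p^{\ast}-1)$.) Note also that the argument in fact yields the sharper pointwise statement $\Pi^{\ast}(v)/s(v)\ge g(\eta(v,p^{\ast}))$, with the theorem's bound tight only when the elasticity at $p^{\ast}$ sits at an endpoint of $[\bar\eta-1,\bar\eta]$.
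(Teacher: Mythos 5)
Your proposal is correct and follows essentially the same route as the paper's proof: bound demand by the constant-elasticity majorant $D(v,p)\le D(v,p^{\ast})(p/p^{\ast})^{\eta(v,p^{\ast})}$ using the monotone-elasticity (log-concavity) property, integrate to bound the pointwise surplus, reduce to the scalar inequality in $\eta(v,p^{\ast})\in[\bar\eta-1,\bar\eta]$ via (quasi/log-)concavity with equal endpoint values, and aggregate over $v$. Your endpoint computation $g(\bar\eta)=g(\bar\eta-1)=(p^{\ast})^{\bar\eta}$ and the supporting-line justification are exactly the steps the paper uses (stated there as quasi-concavity of $-(\eta+1)p^{\eta}$), just spelled out in a bit more detail.
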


\begin{proof}
The profit generated by a uniform price mechanism is given by: 
\begin{equation*}
\Pi ^{\ast }=D(v,p^{\ast })(p^{\ast }-c).
\end{equation*}%
The social surplus is given by: 
\begin{equation*}
S=\int_{c}^{\infty }D(v,p)dp.
\end{equation*}%
The demand satisfies: 
\begin{equation*}
\log (\frac{D(v,p)}{D(v,p^{\ast })})=\int_{p^{\ast }}^{p}\frac{\eta (v,s)}{s}%
ds.
\end{equation*}%
Since the price elasticity is non-increasing, we have that 
\begin{equation*}
\log (\frac{D(v,p)}{D(v,p^{\ast })})=\int_{p^{\ast }}^{p}\frac{\eta (v,s)}{s}%
ds\leq \eta (v,p^{\ast })\int_{p^{\ast }}^{p}\frac{1}{s}ds={\eta (v,p^{\ast
})}\log \left( \frac{p}{p^{\ast }}\right) .
\end{equation*}%
We thus have that: 
\begin{equation*}
D(v,p)\leq D(v,p^{\ast })\left( \frac{p}{p^{\ast }}\right) ^{\eta (v,p^{\ast
})}.
\end{equation*}%
We then have that: 
\begin{equation*}
S\leq D(v,p^{\ast })\left( \frac{1}{p^{\ast }}\right) ^{\eta (v,p^{\ast
})}\int_{c}^{\infty }p^{\eta (v,p^{\ast })}dp=D(v,p^{\ast })\left( \frac{1}{%
p^{\ast }}\right) ^{\eta (v,p^{\ast })}\frac{-1}{\eta (v,p^{\ast })+1}%
c^{\eta (v,p^{\ast })+1}.
\end{equation*}%
We then have that 
\begin{equation*}
\frac{\Pi ^{\ast }}{S}\geq \frac{-(\eta (v,p^{\ast })+1)(p^{\ast })^{\eta
(v,p^{\ast })}(p^{\ast }-c)}{c^{\eta (v,p^{\ast })+1}}.
\end{equation*}%
We now note that the function $g(\eta )\triangleq -(\eta +1)p^{\eta }$ is
quasi-concave in $\eta $. We also have that we assumed that $\eta (v,p^{\ast
})\in \lbrack \underline{\eta },\bar{\eta}]$ and: 
\begin{equation*}
-(\bar{\eta}+1)(p^{\ast })^{\bar{\eta}}=-(\underline{\eta }+1)(p^{\ast })^{%
\underline{\eta }}.
\end{equation*}%
where $\underline{\eta }=\bar{\eta}-1$. We thus have that: 
\begin{equation*}
\frac{\Pi ^{\ast }}{S}\geq \frac{-(\bar{\eta}+1)(p^{\ast })^{\bar{\eta}%
}(p^{\ast }-c)}{c^{\eta (v,p^{\ast })+1}}=\left( \frac{\bar{\eta}}{\bar{\eta}%
+1}\right) ^{\bar{\eta}}.
\end{equation*}%
Now, since the bound was established pointwise for every $v$, it holds in
aggregate across all $v$.
\end{proof}

Theorem \ref{thm:opt} gives a profit guarantee for an environment where the
demand elasticity may vary within a limited range across willingness-to-pay
and price. Thus in contrast to the earlier results, it does not require a
constant demand elasticity. The robustness of the profit guarantee is
perhaps of more relevance when we consider demand rather than cost
elasticity. After all, when the seller lacks information about the
willingness-to-pay of the buyer, he may also lack information about the
demand elasticity. Correspondingly, the bound that we obtain is somewhat
weaker as it refers only to the upper bound in the demand elasticity.
Similarly, we do not establish that the uniform price mechanism is Bayes
optimal for arbitrary nonlinear demand functions that satisfy the above
elasticity condition (\ref{eta}).

\section{Procurement}

We focused throughout on the classic problem of nonlinear pricing where the
seller is uncertain about the demand of the buyers who have private
information regarding their willingness-to-pay for varying quality or
quantity. Alternatively, we might be interested in the robust procurement
policies where a single large buyer wishes to procure from sellers who have
private information about their cost condition. We can then ask what are the
robust procurement policies as measured by the competitive ratio. As the
selling and procurement problem are closely connected, we indeed find that a
similar characterization as in Theorem \ref{th:1} and \ref{thm:opti} exists.
As before, a distinction between quantity differentiation and quality
differentiation proves to be useful.

\subsection{Quality Differentiation}

There is a buyer that procures an object with varying quality from a seller
with unknown cost. The buyer has valuation $q\in \mathbb{R}_{+}$ for quality
and the seller has a cost $\theta \cdot c(q)$ to provide a good of quality $%
q $. The parameter $\theta $ is private information for the seller and
described by distribution $F$. The cost function is given by a constant
elasticity 
\begin{equation*}
c(q)=q^{\eta }/\eta .
\end{equation*}%
The efficient social surplus is generated by finding the optimal quality
given the prevailing cost function: 
\begin{equation*}
S(\theta )=\max_{q}\left\{ q-\theta c(q)\right\} .
\end{equation*}%
The efficient quality is inversely related to the cost parameter $\theta :$ 
\begin{equation*}
q^{\ast }=\left( \frac{1}{\theta }\right) ^{\frac{1}{\eta -1}}
\end{equation*}%
and generates a social surplus of:%
\begin{equation*}
S(\theta )=\frac{\eta -1}{\eta }(\frac{1}{\theta })^{\frac{1}{\eta -1}}.
\end{equation*}%
If the buyer offers a constant price $p$ for every marginal unit of quality,
then the seller will optimally offer a quality: 
\begin{equation*}
\theta c^{\prime }(q)=p\Longleftrightarrow q=(\frac{p}{\theta })^{\frac{1}{%
\eta -1}}\text{.}
\end{equation*}

\begin{corollary}[Surplus Guarantee Mechanism]
\label{c:sgm}\quad \newline
The optimal surplus guarantee mechanism offers a constant unit price 
\begin{equation*}
p=1/\eta
\end{equation*}%
for incremental quality and the buyer is guaranteed a share: 
\begin{equation*}
\left( \frac{1}{\eta }\right) ^{\frac{1}{\eta -1}}
\end{equation*}%
of the efficient social surplus.
\end{corollary}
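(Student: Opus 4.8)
The plan is to exploit the formal duality between the procurement problem stated here and the selling problem of Theorem~\ref{th:1}. In the selling model, the buyer has value $v$ for quality, the seller bears cost $c(q)=q^\eta/\eta$, and the profit-guarantee mechanism offers each incremental unit of quality at a constant markup. In the procurement problem, the roles are reversed: the buyer has a known value $q$ for quality (here the ``quality'' plays the role of the good and the buyer's marginal valuation is constant and normalized to $1$), and it is the seller who has private cost $\theta\cdot c(q)$. First I would make this correspondence explicit: the buyer offering a constant unit price $p$ for incremental quality is exactly the analogue of the seller's constant-markup menu, and the seller's optimal response $q=(p/\theta)^{1/(\eta-1)}$ is the analogue of the buyer's optimal quality choice $Q(v)=(v/\eta)^{1/(\eta-1)}$ once we identify $1/\theta$ with $v$ (up to the scaling built into the price $p$).

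Second, I would carry out the direct computation for a fixed cost type $\theta$. With unit price $p$, the seller supplies $q(\theta)=(p/\theta)^{1/(\eta-1)}$. The buyer's surplus at this type is $U(\theta)=q(\theta)-p\,q(\theta)=(1-p)q(\theta)$, while the efficient surplus is $S(\theta)=\frac{\eta-1}{\eta}(1/\theta)^{1/(\eta-1)}$. Substituting $q(\theta)=p^{1/(\eta-1)}(1/\theta)^{1/(\eta-1)}$ gives
\begin{equation*}
\frac{U(\theta)}{S(\theta)}=\frac{\eta}{\eta-1}(1-p)\,p^{\frac{1}{\eta-1}},
\end{equation*}
which is independent of $\theta$. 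Hence the buyer's normalized surplus is the same constant across all cost distributions $F$, just as in Corollary~\ref{csc}. Maximizing the right-hand side over $p\in(0,1)$, the first-order condition $\frac{1}{\eta-1}p^{\frac{1}{\eta-1}-1}(1-p)-p^{\frac{1}{\eta-1}}=0$ yields $p=1/\eta$. Plugging $p=1/\eta$ back in and simplifying, using $\frac{\eta}{\eta-1}\cdot\frac{\eta-1}{\eta}=1$, the guaranteed share collapses to $(1/\eta)^{1/(\eta-1)}$, exactly as claimed.

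Third, I would argue optimality — that no mechanism can do better for the worst-case $F$ — by invoking the duality with Theorem~\ref{prop:2}. Because the procurement problem with constant-elasticity cost is, after the change of variables $v\leftrightarrow 1/\theta$ (and the attendant rescaling of prices), isomorphic to the selling problem, the extremal distribution is again a Pareto distribution, now over the cost parameter $\theta$, with the shape parameter $\eta/(\eta-1)$ transported through the change of variables. Against this distribution the constant-unit-price mechanism is the Bayes-optimal procurement mechanism, so the bound $(1/\eta)^{1/(\eta-1)}$ is attained and cannot be improved. This mirrors exactly the logic by which Theorem~\ref{prop:2} follows from Theorem~\ref{th:1}.

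The main obstacle is being careful about the change of variables: in the selling problem the state $v$ enters the buyer's utility linearly and the distribution is over $v$, whereas in the procurement problem the state $\theta$ enters multiplicatively in the cost and one naturally takes the distribution over $\theta$. The map $v=1/\theta$ (together with tracking how a Pareto law on $v$ pushes forward to a distribution on $\theta$, and how the unit price $p$ in procurement corresponds to the markup parameter $z$ in selling) must be set up so that the finiteness condition $\lim_{v\to\infty}(1-F(v))v^{\eta/(\eta-1)}=0$ and the form of the optimal allocation both transfer cleanly. Once that bookkeeping is done, the optimization over $p$ is a one-line calculus exercise and everything else is a direct substitution.
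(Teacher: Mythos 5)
Your direct computation (the seller's response $q(\theta)=(p/\theta)^{1/(\eta-1)}$, the pointwise ratio $U(\theta)/S(\theta)=\tfrac{\eta}{\eta-1}(1-p)p^{1/(\eta-1)}$ independent of $\theta$, and the maximization giving $p=1/\eta$ and share $\eta^{-1/(\eta-1)}$) is correct and is essentially how the paper obtains the guarantee. The gap is in your optimality step. The procurement problem is \emph{not} isomorphic to the quality-selling problem of Theorem \ref{th:1} under $v\leftrightarrow 1/\theta$: in the selling problem the agent's type multiplies the allocation linearly and the designer bears the convex cost, whereas here the type multiplies the convex cost and the designer's value is linear, and the roles of designer and agent are interchanged. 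A quick sanity check shows no change of variables can transport Theorem \ref{prop:2}: the selling guarantee is $\eta^{-\eta/(\eta-1)}$ while the procurement guarantee is $\eta^{-1/(\eta-1)}$, so the two minmax values differ. Consequently your identification of the worst case as ``the Pareto with shape $\eta/(\eta-1)$ transported through the change of variables'' is wrong. Against the power distribution $F(\theta)=\theta^{\alpha}$ on $[0,1]$ (i.e., Pareto with shape $\alpha$ for $1/\theta$), the Bayes-optimal procurement mechanism maximizes $q-\theta c(q)-c(q)F(\theta)/f(\theta)$, which gives the virtual cost $\theta\tfrac{\alpha+1}{\alpha}$ and hence the allocation of a constant unit price $p=\tfrac{\alpha}{\alpha+1}$; matching $p=1/\eta$ forces $\alpha=\tfrac{1}{\eta-1}$, not $\tfrac{\eta}{\eta-1}$. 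With your candidate shape $\eta/(\eta-1)$ the Bayes-optimal price would be $\eta/(2\eta-1)\neq 1/\eta$, so the saddle-point argument would simply fail.

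The fix is the computation the paper itself sketches: show the constant-price-$1/\eta$ allocation coincides with the Bayes-optimal allocation when $\alpha=1/(\eta-1)$, and then run the same limiting argument as in Theorem \ref{prop:2}, since at $\alpha=1/(\eta-1)$ the expected efficient surplus $\mathbb{E}[\theta^{-1/(\eta-1)}]$ diverges; one takes $\alpha\downarrow 1/(\eta-1)$ (or truncations), noting that the bounded rent left to the worst type $\theta=1$ vanishes relative to the exploding surplus, so the buyer's Bayes-optimal share converges to $\eta^{-1/(\eta-1)}$ and the guarantee cannot be improved. Relatedly, the finiteness condition you propose to ``transfer,'' $\lim_{v\to\infty}(1-F(v))v^{\eta/(\eta-1)}=0$, is also the wrong exponent here: the efficient surplus scales as $\theta^{-1/(\eta-1)}$, so the relevant tail condition involves $v^{1/(\eta-1)}$ with $v=1/\theta$. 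These are symptoms of the same underlying issue — the problems are analogous in structure but not equivalent, so the worst-case distribution must be re-derived rather than imported.
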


Thus, the robust optimal pricing policy is a uniform unit price for quality
at which the seller can then deliver the optimal quality. The surplus
guarantee is increasing in the elasticity of the cost function of the seller.

If $\theta $ follows a power function $F(\theta )=\theta ^{\alpha }$, the
optimal mechanism consists of maximizing: 
\begin{equation*}
\max \left\{ q-\theta c(q)-c(q)\frac{F(\theta )}{f(\theta )}\right\}.
\end{equation*}%
We get: 
\begin{equation*}
1-\theta (\frac{\alpha +1}{\alpha })c^{\prime }(q)=0.
\end{equation*}%
We thus get: 
\begin{equation*}
q=(\frac{\alpha }{\theta (\alpha +1)})^{\frac{1}{\eta -1}}.
\end{equation*}%
We get the same result when $\alpha =1/(\eta -1)$.

\subsection{Quantity Differentiation}

We can alternatively consider the case where the buyer has a declining
marginal utility for quantity and the seller has a constant marginal cost of
producing additional units of the product. The buyer thus has a utility
function $u(q)$, where $q$ is the quantity of the good and 
\begin{equation}
u(q)=\eta q^{(\eta +1)/\eta }/(\eta +1)  \label{e}
\end{equation}%
with a demand elasticity: 
\begin{equation}
\eta \in \left( -\infty ,-1\right) .  \label{el}
\end{equation}%
The utility function is increasing and concave in the above range with 
\begin{equation*}
u^{\prime }\left( q\right) =q^{\frac{1}{\eta }}>0,u^{\prime \prime }\left(
q\right) =\frac{q^{\frac{1}{\eta }-1}}{\eta }<0.
\end{equation*}%
(For $\eta >-1$, the above parametrization gives a negative gross utility.)
The seller has cost 
\begin{equation*}
c\left( q\right) =\theta \cdot q,
\end{equation*}%
where the marginal cost $\theta $ is private information for the seller and
given by a common prior distribution. The first-best surplus is: 
\begin{equation*}
S(\theta )=\max_{q}\left\{ u(q)-\theta q\right\} .
\end{equation*}%
The efficient quantity is: 
\begin{equation*}
q^{\ast }=\theta ^{\eta }
\end{equation*}%
and the social surplus is%
\begin{equation*}
S(\theta )=-\frac{\theta ^{\eta +1}}{\eta +1}.
\end{equation*}

\begin{corollary}[Surplus Guarantee Mechanism]
\label{c:sgmn}\quad \newline
The optimal surplus guarantee mechanism offers a constant mark-up 
\begin{equation*}
p(q)=\frac{\eta +1}{\eta }q^{1/\eta }
\end{equation*}%
for quantity and the buyer is guaranteed a share: 
\begin{equation*}
\left( \frac{\eta }{\eta +1}\right) ^{\eta +1}
\end{equation*}%
of the optimal social surplus.
\end{corollary}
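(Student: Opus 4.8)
\textbf{Proof proposal for Corollary \ref{c:sgmn}.}

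The plan is to exploit the formal duality between this procurement problem and the quantity-discrimination selling problem already solved in Theorem \ref{thm:opti}. In the selling model of Section \ref{sec:qua}, the buyer has utility $v\frac{\eta}{\eta+1}q^{(\eta+1)/\eta}-t$ with $v$ private, and the uniform-price mechanism $p^{\ast}=\eta/(\eta+1)$ secures the fraction $(\eta/(\eta+1))^{\eta}$ of social surplus. Here the roles are transposed: the buyer is the mechanism designer, the seller's private cost parameter $\theta$ plays the role of the type, and the quantity supplied responds to the posted marginal price through the seller's optimization $u'(q)=p(q)$ — mirroring how the buyer's quantity responded to marginal price in the selling model. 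So the first step is to set up this correspondence explicitly and note that ``guaranteeing the buyer a share of surplus'' against all $F$ on $\theta$ is the exact analogue of ``guaranteeing the seller a share of surplus'' against all $F$ on $v$.

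The second step is the direct verification, which I expect to go through just as cleanly as the proof of Theorem \ref{th:1}. Fix $\theta$. Under the posted marginal price $p(q)=\frac{\eta+1}{\eta}q^{1/\eta}$, the seller chooses $q$ to maximize $\int_0^q p(s)\,ds-\theta q$; since $u'(q)=q^{1/\eta}$ and $p(q)=\frac{\eta+1}{\eta}u'(q)$, this is a constant-markup rule and the seller's first-order condition pins down $q(\theta)$ as a constant multiple of the efficient quantity $q^{\ast}=\theta^{\eta}$. I would then compute the buyer's realized surplus $\int_0^{q(\theta)} p(s)\,ds - \text{(payment the buyer keeps)}$ — more precisely the buyer's net utility $u(q(\theta)) - \int_0^{q(\theta)}p(s)\,ds$ — and show it equals $\left(\frac{\eta}{\eta+1}\right)^{\eta+1}S(\theta)$ pointwise in $\theta$, using $S(\theta)=-\theta^{\eta+1}/(\eta+1)$. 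Because the identity holds for every $\theta$, it holds in expectation for every $F$, so this mechanism attains the stated share uniformly; this shows the guarantee is achievable.

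The third step is optimality: no mechanism can do better against all $F$. Here I would either invoke the change-of-variables argument already used in Theorem \ref{thm:opti} to import the minimax bound wholesale, or argue directly that the power distribution $F(\theta)=\theta^{\alpha}$ with $\alpha=-1/(\eta+1)$ (equivalently, in the notation preceding this corollary, $\alpha$ chosen so that the distorted marginal condition $1-\theta\frac{\alpha+1}{\alpha}c'(q)=0$ reproduces the posted price) makes the proposed uniform-markup mechanism exactly Bayes-optimal for the buyer, so that against this particular $F$ the buyer cannot exceed the claimed share — pinning the minimax value. The main obstacle, such as it is, will be bookkeeping: getting the power-law exponent for the worst-case $\theta$-distribution right and making sure the markup constant $\frac{\eta+1}{\eta}$ (rather than its reciprocal or negative) survives all the sign flips that come from $\eta<-1$ and from $S(\theta)$ being written as a negative quantity over $\eta+1<0$. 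Since the surrounding derivations in Section \ref{sec:qua} already fix all these conventions, I expect the corollary to follow without genuinely new ideas, which is presumably why it is stated as a corollary rather than a theorem.
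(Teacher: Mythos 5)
Your steps 1--2 coincide with what the paper itself does for this corollary: it posts the marginal price schedule $p(q)=zq^{1/\eta}$, uses the seller's first-order condition $p(q)=\theta$ to get $q(\theta)=(\theta/z)^{\eta}$, computes the buyer's surplus $(1-z)\frac{\eta}{\eta+1}(\theta/z)^{\eta+1}$ pointwise in $\theta$, and optimizes over $z$ to obtain $z=\frac{\eta+1}{\eta}$ and the share $\left(\frac{\eta}{\eta+1}\right)^{\eta+1}$ of $S(\theta)=-\theta^{\eta+1}/(\eta+1)$; since the proportionality holds for every $\theta$, it holds for every $F$. So the achievability half of your plan is exactly the paper's argument (the paper does not spell out the minimax optimality step for this particular corollary, relying on the analogy with Theorem \ref{thm:opti}).

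Your step 3, which goes beyond the paper's displayed derivation, contains a concrete slip. Under $F(\theta)=\theta^{\alpha}$ the buyer-optimal allocation solves $u'(q)=q^{1/\eta}=\theta\frac{\alpha+1}{\alpha}$ (virtual cost, since the cost is linear here), while the posted-markup allocation solves $u'(q)=\frac{\eta}{\eta+1}\theta$; matching them requires $\frac{\alpha+1}{\alpha}=\frac{\eta}{\eta+1}$, i.e.\ $\alpha=-(\eta+1)=|\eta|-1$, not $\alpha=-1/(\eta+1)$ (your value is the reciprocal; they agree only at $\eta=-2$). Indeed for $\alpha<-(\eta+1)$ one has $\mathbb{E}[\theta^{\eta+1}]=\infty$, so your proposed distribution is not even admissible. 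Two further cautions if you execute this step: (i) even with the correct exponent, equality of allocations does not make the posted schedule \emph{exactly} Bayes-optimal, because the two transfer rules differ by a constant --- the posted markup leaves strictly positive rent $-\theta q(\theta)/(\eta+1)>0$ to the highest-cost type --- so you should compare values rather than mechanisms: under $F=\theta^{\alpha}$ the buyer-optimal share of surplus is $\left(\frac{\alpha+1}{\alpha}\right)^{\eta+1}$, which decreases to $\left(\frac{\eta}{\eta+1}\right)^{\eta+1}$ only in the limit $\alpha\downarrow-(\eta+1)$, mirroring the Pareto limit $\alpha\rightarrow|\eta|$ in Theorem \ref{thm:opti}; (ii) the ``distorted marginal condition'' $1-\theta\frac{\alpha+1}{\alpha}c'(q)=0$ that you quote belongs to the quality-procurement case with cost $\theta c(q)$, not to this linear-cost quantity case, so it cannot be used verbatim to pin down $\alpha$.
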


If the buyer sets a constant mark-up $p(q)=zq^{1/\eta }$, the seller will
sell 
\begin{equation*}
\theta =p(q).
\end{equation*}%
So, $q=(\frac{\theta }{z})^{\eta }$. So the buyer surplus will be: 
\begin{equation*}
u(q)-\int_{0}^{q}p(y)dy=\frac{\eta }{\eta +1}(\frac{\theta }{z})^{\eta +1}-z%
\frac{\eta }{\eta +1}(\frac{\theta }{z})^{\eta +1}
\end{equation*}%
%
%
%
We get that the optimal is: 
\begin{equation*}
z=\frac{\eta +1}{\eta }.
\end{equation*}%
Thus the buyer offers a constant mark-up pricing policy that the price per
unit of quantity delivered is proportional to the marginal valuation.

Thus, for a large buyer, the optimal policy is not a constant unit price but
a constant mark-up price that is decreasing in quantity. (This is easy to
implement and offers a trade-off between commitment and insurance.)

\section{Conclusion\label{sec:con}}

We established robust pricing and menu policies for environments with second
degree price discrimination. We showed that simple pricing policies, namely
constant mark-up policies for the case of quality differentiation and linear
pricing rules for the case of quantity differentiation attain the highest
profit guarantee for the seller.

We established the optimality of these pricing policy for constant
elasticity of cost or demand functions. But we showed that the features of
the policies enable us to establish bounds even beyond the case of constant
elasticity when we merely assume convexity for cost functions or concavity
for demand functions.

In the analysis we focused on the classic optimal selling problem where the
seller is designing a menu of choices to screen the buyers with private
information regarding their willingness to pay. We showed however that the
same arguments and associated mechanism allow us to establish utility
guarantees in procurement settings. Here, the buyer seeks to derive optimal
purchasing policies against vendors with private information regarding their
cost or marginal cost of producing quantities or qualities. Thus, we derived
robust profit or utility guarantees across a wide spectrum of nonlinear
pricing problems.

We formulated the profit guarantee in terms of a competitive ratio relative
to the socially efficient surplus. As part of the minmax problem, the Pareto
distribution emerged as critical distribution that minimizes the revenue of
the seller. As critical value distribution, the Pareto distribution
generates a linear virtual utility. This suggests that a version of the
Pareto distribution may also play an important role in related problems. For
example, \cite{bebm15} consider the limits of third-degree price
discrimination where the consumers all have unit-demand. It is an open
problem how market segmentation can impact the surplus distribution in the
context of nonlinear pricing problems, thus when we allow jointly for second
and third-degree price discrimination. In related work, \cite{behm23} we
investigate the consumer surplus maximizing distribution for nonlinear
pricing problems in the spirit of \cite{rosz17} and \cite{dero22} who
consider single and many-item allocation problems respectively.\newpage

\newpage

\bibliographystyle{econometrica}
\bibliography{general}

\end{document}